\newcommand*\patchAmsMathEnvironmentForLineno[1]{%
	\expandafter\let\csname old#1\expandafter\endcsname\csname #1\endcsname
	\expandafter\let\csname oldend#1\expandafter\endcsname\csname end#1\endcsname
	\renewenvironment{#1}%
	{\linenomath\csname old#1\endcsname}%
	{\csname oldend#1\endcsname\endlinenomath}}%
	\newcommand*\patchBothAmsMathEnvironmentsForLineno[1]{%
	\patchAmsMathEnvironmentForLineno{#1}%
	\patchAmsMathEnvironmentForLineno{#1*}}%
\newcommand{\eq}[1]{\textbf{Eq.~\ref{eq:#1}}}
\newcommand{\fig}[1]{\textbf{Fig.~\ref{fig:#1}}}
\newcommand{\prop}[1]{\textbf{Proposition~\ref{prop:#1}}}
\newcommand{\thm}[1]{\textbf{Theorem~\ref{thm:#1}}}
\newcommand{\sect}[1]{Section~\ref{sec:#1}}
\newtheorem{corollary}{Corollary}
\newtheorem{proposition}{Proposition}
\newtheorem{theorem}{Theorem}
\newtheorem*{equalizertheorem}{Theorem~\ref{thm:equalizer}}
\newtheorem*{propendpoints}{Proposition~\ref{prop:endpoints}}
\theoremstyle{definition}
\newtheorem{remark}{Remark}
\title{\begin{center} \bfseries \singlespacing ~\\[-0.8cm]
Payoff landscapes and the robustness of selfish optimization in iterated games
\end{center}}
\author{\parbox[c]{16cm}{\onehalfspacing \centering ~\\[-0.5cm] Arjun Mirani$^{1,2}$ and Alex McAvoy$^{3,4}$\\ \quad\\ \footnotesize
$^{1}$Harvard College, Cambridge, MA, USA \\
$^{2}$Department of Applied Mathematics and Theoretical Physics, University of Cambridge, Cambridge, UK \\
$^{3}$Department of Mathematics, University of Pennsylvania, Philadelphia, PA, USA \\
$^{4}$Center for Mathematical Biology, University of Pennsylvania, Philadelphia, PA, USA \\[0.1cm]}
\date{}
}
\begin{document}

\allowdisplaybreaks

\maketitle

{\fontfamily{put}\selectfont}

\begin{abstract}
In iterated games, a player can unilaterally exert influence over the outcome through a careful choice of strategy. A powerful class of such ``payoff control'' strategies was discovered by \citet{pressdyson}. Their so-called ``zero-determinant'' (ZD) strategies allow a player to unilaterally enforce a linear relationship between both players' payoffs. It was subsequently shown by \citet{chenzinger} that when the slope of this linear relationship is positive, ZD strategies are robustly effective against a selfishly optimizing co-player, in that all adapting paths of the selfish player lead to the maximal payoffs for both players (at least when there are certain restrictions on the game parameters). In this paper, we investigate the efficacy of selfish learning against a fixed player in more general settings, for both ZD and non-ZD strategies. We first prove that in any symmetric $2\times 2$ game, the selfish player's final strategy must be of a certain form and cannot be fully stochastic. We then show that there are prisoner's dilemma interactions for which selfish optimization does not always lead to maximal payoffs against fixed ZD strategies with positive slope. We give examples of selfish adapting paths that lead to locally but not globally optimal payoffs, undermining the robustness of payoff control strategies. For non-ZD strategies, these pathologies arise regardless of the original restrictions on the game parameters. Our results illuminate the difficulty of implementing robust payoff control and selfish optimization, even in the simplest context of playing against a fixed strategy.
\end{abstract}

\section{Introduction}
Simple mathematical models of interactions abound in the literature on theoretical biology and the social sciences. Iterated games, which have been used extensively to study reciprocation of altruistic behaviors, comprise one such class of models. Although iterated games abstract away many details of realistic encounters, they have been enormously useful in establishing the theoretical foundations for phenomena such as the widespread prevalence of prosocial behaviors \citep{axelrodbook,rapoport,maynard,sigmundbook}. And it is likely that no iterated game has received as much attention as the iterated prisoner's dilemma \citep{nowak:fiverules,nowak:book,doebeli:ecology:2005,lorens:PNAS:2005,segbroeck:PRL:2012}.

Every round of the iterated prisoner's dilemma (IPD) is a ``one-shot'' prisoner's dilemma game, set up as follows. Each player can either cooperate ($C$) or defect ($D$), with their actions chosen simultaneously. For outcomes $\left(C,C\right)$, $\left(C,D\right)$, $\left(D,C\right)$, and $\left(D,D\right)$, the payoffs to the two players are $\left(R,R\right)$, $\left(S,T\right)$, $\left(T,S\right)$, $\left(P,P\right)$, respectively. Since prisoner's dilemma games are characterized by $T>R>P>S$, in the absence of knowing the co-player's action, each player would rationally choose to defect, making mutual defection the unique Nash equilibrium for the one-shot game \citep{nash}. However, mutual cooperation yields a higher payoff for each player than mutual defection does. Recognizing this, players who repeatedly interact can choose to condition their future actions (perhaps probabilistically) on the outcome of earlier rounds, as humans do in social dilemmas \citep{rand:CogSci:2013}. Such conditional behavior enables direct reciprocity, an important mechanism for the evolution of cooperation: $X$ will cooperate with $Y$ if $Y$ has a history of cooperating with $X$, and vice versa \citep{baek:Nature:2016}.

Since conditional strategies require players to remember previous rounds, an important consideration is the memory length of a player. A memory-$n$ strategy conditions the next move on the previous $n$ rounds \citep{hilbe:PNAS:2017}. A player with infinite memory uses the entire history of play to determine his or her next action. Here, we restrict our attention to memory-one strategies, which enable a rich variety of behaviors while remaining analytically tractable and easy to implement. This is an extremely common assumption in the literature, and indeed some of the most successful strategies, such as ``tit-for-tat,'' ``generous tit-for-tat,'' and ``win-stay, lose-shift,'' are memory-one strategies \citep{nowak:WSLS,rand:JTB:2009}.

After several decades of thorough analysis, interest in the IPD was revitalized by Press and Dyson's discovery of ``zero-determinant'' (ZD) strategies in 2012 \citep{pressdyson}. This powerful class of memory-one strategies allows a player to unilaterally impose linear relationships between both players' payoffs, or even set the co-player's payoff to a fixed value regardless of the latter's strategy, thereby enabling behaviors like extortion, generosity, and fairness \citep{hilbe:adaptive,pressdyson,stewart:PNAS:2013}. ZD strategies are a method of ``payoff control'' in that they allow a player to restrict the set of feasible payoffs of the game. More general approaches to payoff control, involving linear inequalities among the payoffs, include what are known as ``partner'' and ``rival'' strategies \citep{hilbe:GEB:2015,payoffcontrol,partnersrivals}. By appropriately restricting the feasible payoffs, such strategies can, in principle, align a self-concerned co-player's optimal play with a desired behavior.

But this observation raises an important question: is it even possible for a selfish player, who aims to optimize his or her own payoff only, to fully accomplish this goal? \citet{pressdyson} conjectured, based on numerical simulations, that a selfish adapting player always learns to cooperate against a ZD player who imposes a positive-slope linear payoff constraint that unfairly benefits the fixed player (an ``extortionate'' strategy). \citet{chenzinger} formally proved this conjecture (in fact, a stronger version) for IPDs satisfying $2P<S+T<2R$. In particular, all adapting paths of the selfish player lead to repeated cooperation, maximizing both players' payoffs. Therefore, by using an appropriate ZD strategy to align the incentives of the two players \emph{in principle}, an individual can elicit cooperation from a purely self-motivated co-player \emph{in practice}.

In this paper, we investigate the robustness of payoff control strategies in achieving the desired outcome against a selfish learner who optimizes through local hill-climbing. We build on \citep{chenzinger} by considering whether robustness holds under different conditions on the game parameters, as well as for non-ZD strategies. First, we analytically establish constraints on the optima of the payoff landscape over which the selfish learner optimizes. We use these constraints to prove that the selfish player's learned strategy must be of a certain form and in general cannot be a fully mixed strategy. These results apply not just to the IPD, but to all iterated, symmetric, two-player, two-action games. Next, using gradient-based optimization against an opponent with a ZD strategy of positive slope, we demonstrate that the adapting player often ends up at locally, but not globally, optimal strategies in IPDs for which $S+T>2R$ or $S+T<2P$. When the fixed player uses a general memory-one strategy, these pathologies can arise even when $2P<S+T<2R$. Finally, we conclude with a discussion of the effects of noise, both in the optimization procedure itself and in the implementation of strategies, which we find to mitigate some of the inefficiencies of gradient ascent.

\section{Model and summary of previous results}
We consider two players, $X$ and $Y$, engaged in an iterated game. In each round, the players choose their respective actions $x,y\in\left\{C,D\right\}$. The payoffs in each round are specified by a vector $\left(R,S,T,P\right)\in\mathbb{R}^{4}$, corresponding to the joint actions $\left(CC,CD,DC,DD\right)$ from the focal player's perspective (e.g. if $X$ plays $C$ and $Y$ plays $D$, then $X$ gets $S$ and $Y$ gets $T$). Different payoff vectors define different games. The class of prisoner's dilemma interactions satisfies $T>R>P>S$, which ensures that defection is the dominant strategy. An additional condition $2R >S+T$ ensures that mutual cooperation is the socially optimal outcome. If instead we have $S+T>2R$, the socially optimal behavior is anti-coordinated alternation between cooperation and defection, with one player using $C$ in even rounds only and the other using $C$ in odd rounds only.

When the game is iterated, $X$ and $Y$ choose their next action based on the actions of the previous round. Each player's choice is governed by a memory-one strategy, which specifies probabilities $p_{xy}\in\left[0,1\right]$ for cooperating in the next round given each possible action $x$ of $X$ and $y$ of $Y$ in the previous round, where $x,y\in\left\{C,D\right\}$. Therefore, a memory-one strategy is specified by a $4$-tuple of probabilities, $\mathbf{p}=\left(p_{C C},p_{C D},p_{D C},p_{D D}\right)\in\left[0,1\right]^{4}$, along with the probability of cooperating in the initial round, $p_{0}\in\left[0,1\right]$ (where there is no history on which to condition). A component $p_{xy}$ corresponds to a ``pure'' action if it is $0$ (defection) or $1$ (cooperation); otherwise, it is ``mixed.''

The initial actions define a distribution over action pairs, $\left(CC,CD,DC,DD\right)$,
\begin{align}
\nu_{0}\left(p_{0},q_{0}\right) &= \left(p_{0}q_{0},p_{0}\left(1-q_{0}\right) ,\left(1-p_{0}\right) q_{0},\left(1-p_{0}\right)\left(1-q_{0}\right)\right) .
\end{align}
Given conditional strategies $\mathbf{p}=\left(p_{CC},p_{CD},p_{DC},p_{DD}\right)$ for $X$ and $\mathbf{q}=\left(q_{CC},q_{CD},q_{DC},q_{DD}\right)$ for $Y$, the iterated game defines a Markov chain on $\left(CC,CD,DC,DD\right)$ whose transition matrix is
\begin{align}
M &= 
\bordermatrix{%
& CC & CD & DC & DD \cr
CC &\ p_{CC} q_{CC} &\ p_{CC}\left(1-q_{CC}\right) &\ \left(1-p_{CC}\right) q_{CC} &\ \left(1-p_{CC}\right)\left(1-q_{CC}\right) \cr
CD &\ p_{CD} q_{DC} &\ p_{CD}\left(1-q_{DC}\right) &\ \left(1-p_{CD}\right) q_{DC} &\ \left(1-p_{CD}\right)\left(1-q_{DC}\right) \cr
DC &\ p_{DC} q_{CD} &\ p_{DC}\left(1-q_{CD}\right) &\ \left(1-p_{DC}\right) q_{CD} &\ \left(1-p_{DC}\right)\left(1-q_{CD}\right) \cr
DD &\ p_{DD} q_{DD} &\ p_{DD}\left(1-q_{DD}\right) &\ \left(1-p_{DD}\right) q_{DD} &\ \left(1-p_{DD}\right)\left(1-q_{DD}\right) \cr
}\ . \label{eq:markov_matrix}
\end{align}
If $\nu_{t}$ is the distribution over actions at time $t$, then we have $\nu_{t+1}=\nu_{t}M$ for all $t\geqslant 0$.

The expected payoffs to $X$ and $Y$ in round $t$ are then $\pi_{X,t}=\left<\nu_{t},\left(R,S,T,P\right)\right>$ and $\pi_{Y,t}=\left<\nu_{t},\left(R,T,S,P\right)\right>$, respectively, where $\left<\cdot ,\cdot\right>$ is the standard inner (dot) product on $\mathbb{R}^{4}$. In general, an iterated game has a discounting factor $\lambda <1$ specifying the probability that game proceeds one round further. The expected payoffs to the players, $\pi_{X}$ and $\pi_{Y}$, are calculated by discounting the one-shot payoffs in round $t$ by a factor of $\lambda^{t}$, summing these discounted payoffs in the limit $t\rightarrow\infty$, and dividing the results by the expected number of rounds, $1/\left(1-\lambda\right)$. In other words, $\pi_{X}=\left(1-\lambda\right) \sum_{t=0}^{\infty} \lambda^{t} \pi_{X,t}$ and $\pi_{Y}=\left(1-\lambda\right) \sum_{t=0}^{\infty} \lambda^{t} \pi_{Y,t}$. Letting $I$ denote the identity matrix, and using the recurrence for $\nu_{t}$ mentioned previously, one obtains the following expressions:
\begin{subequations}
\begin{align}
\pi_{X} &= \left< \left(1-\lambda\right)\nu_{0} \left(I-\lambda M\right)^{-1} , \left(R,S,T,P\right) \right> ; \\
\pi_{Y} &= \left< \left(1-\lambda\right)\nu_{0} \left(I- \lambda M\right)^{-1} , \left(R,T,S,P\right) \right> .
\end{align}
\end{subequations}

Here, we take the limit $\lambda \to 1^{-}$ so that the game is iterated infinitely. When the Markov chain defined by \eq{markov_matrix} has a unique limiting distribution, $\nu$, we can then write $\pi_{X}=\left<\nu ,\left(R,S,T,P\right)\right>$ and $\pi_{Y}=\left<\nu ,\left(R,T,S,P\right)\right>$. Such is the case when $\mathbf{p},\mathbf{q}\in\left(0,1\right)^{4}$, and then $\nu$ is the unique probability vector satisfying $\nu M=\nu$. More generally, $\nu$ is given by the rows of the Ces\`{a}ro limit
\begin{align}
M^{\ast} &= \lim_{n\rightarrow\infty} \frac{1}{n} \sum_{k=1}^{n} M^{k} .
\end{align}
When the chain is aperiodic, $M^{\ast}$ agrees with $\lim_{n\rightarrow\infty}M^{n}$. For the vast majority of strategy pairs $\mathbf{p}$ and $\mathbf{q}$, the stationary distribution is unique; the Markov chain is either irreducible or, if it has transient states, it has a unique closed communicating class. In this case, $M^{\ast}$ is a rank-one matrix with all its rows proportional to $\nu$. Multiple stationary distributions occur only for certain combinations of $0$s and $1$s in $\mathbf{p}$ and $\mathbf{q}$, in which case the long-run payoffs depend on the initial state. A more detailed discussion of edge cases can be found in \citep{chenzinger}. It is useful to note that when either $\mathbf{p}$ or $\mathbf{q}$ is in $\left(0,1\right)^{4}$, $\nu$ is well-defined for all values of the other strategy except $\left(1,1,0,0\right)$ (``repeat'').

\subsection{Payoff control and zero-determinant strategies}
The payoffs in an iterated interaction clearly depend on both players' strategies. A natural question is then the following: to what extent can one player, say $X$, constrain the other player's payoff through a careful choice of strategy, $\mathbf{p}$? A striking method of payoff control was discovered by \citet{pressdyson}, who showed that a player can unilaterally impose a linear relationship between the payoffs, regardless of the co-player's strategy. Such a ``zero-determinant'' (ZD) strategy $\mathbf{p}$ enforces the relationship $\pi_{X}-\kappa =\chi\left(\pi_{Y}-\kappa\right)$ whenever there exists $\phi\in\mathbb{R}$ with
\begin{subequations}\label{eq:ZD_formula}
\begin{align}
p_{CC} &= 1-\phi\left(\chi -1\right)\left(R-\kappa\right) ; \\
p_{CD} &= 1-\phi\left(\kappa -S+\chi\left(T-\kappa\right)\right) ; \\
p_{DC} &= \phi\left(T-\kappa + \chi\left(\kappa -S\right)\right) ; \\
p_{DD} &= \phi\left(\chi -1\right)\left(\kappa -P\right) .
\end{align}
\end{subequations}
For given $\kappa$ and $\chi$, there might be many (or no) strategies $\mathbf{p}$ that enforce $\pi_{X}-\kappa =\chi\left(\pi_{Y}-\kappa\right)$, corresponding to a range of allowed values of $\phi$ that ensure $\mathbf{p}\in\left[0,1\right]^{4}$. For instance, in an IPD, there are feasible strategies whenever $\kappa\in\left[P,R\right]$ and $\chi\geqslant 1$ or $\chi\leqslant -\max\left(\frac{T-\kappa}{\kappa -S},\frac{\kappa -S}{T-\kappa}\right)$ \citep{hilbe:adaptive}.

Using the terminology of \citet{chenzinger}, we refer to strategies with $\chi>0$ as ``positively correlated'' (pcZD) strategies (although in generalized social dilemmas \citep{hauert:JTB:2006}, $\chi$ cannot take on values between $0$ and $1$). Important special cases of pcZD strategies in social dilemmas are so-called extortionate strategies ($\kappa =P$ and $\chi>1$) and generous strategies ($\kappa =R$ and $\chi >1$). The former ensure that $X$ receives a larger payoff than $Y$ beyond the mutual defection value, while the latter ensure that $Y$'s payoff is closer to the mutual cooperation value than $X$'s payoff is. Strategies with $\chi=1$, enforcing $\pi_{X}=\pi_{Y}$, are called fair, and include the well-known strategy tit-for-tat. In all generalized social dilemmas satisfying $S+T<2R$, $Y$'s optimal behavior against a pcZD strategy is effectively unconditional cooperation, which maximizes both players' payoffs. A pcZD strategy thus incentivizes $Y$ to cooperate even if doing so provides an unfair benefit to $X$. Depending on the particulars of $X$'s strategy, $Y$'s repeated cooperation can be implemented by various strategies, not just $\left(1,1,1,1\right)$. For instance, if $X$'s strategy satisfies $p_{CC}=1$, then $Y$ using a strategy with $q_{CC}=1$ would suffice to achieve repeated cooperation for most values of $\left(p_{CD},p_{DC},p_{DD}\right) ,\left(q_{CD},q_{DC},q_{DD}\right)\in\left[0,1\right]^{3}$ as long the Markov chain of \eq{markov_matrix} has a unique stationary distribution. Similarly, $\mathbf{q}=\left(1,1,q_{DC},q_{DD}\right)$ is equivalent to $\left(1,1,1,1\right)$ in most cases.

Another interesting kind of ZD strategy, particularly relevant for our results, is the so-called ``equalizer'' \citep{boerlijst:AMM:1997}, which allows $X$ to enforce $\pi_{Y}=K$ for certain $K\in\mathbb{R}$, regardless of $Y$'s strategy. \citet{pressdyson} show that if
\begin{subequations}
\begin{align}
p_{CD} &= \frac{p_{CC}\left(T-P\right) -\left(1+p_{DD}\right)\left(T-R\right)}{R-P} ; \\
p_{DC} &= \frac{\left(1-p_{CC}\right)\left(P-S\right) -p_{DD}\left(R-S\right)}{R-P} ,
\end{align}
\end{subequations}
then $X$ can ensure that $\pi_{Y}=\frac{\left(1-p_{CC}\right) P+p_{DD}R}{1-p_{CC}+p_{DD}}$. While equalizers (and ZD strategies more generally) represent a rather stringent form of payoff control, $X$ can also impose weaker conditions. For example, \citet{payoffcontrol} demonstrate how $X$ can choose its strategy to enforce linear inequalities involving $\pi_{X}$ and $\pi_{Y}$, such as simply setting an upper bound on $Y$'s payoff, $\pi_{Y}\leqslant K$.

Finally, perhaps the simplest way to visualize such attempts at payoff control is in terms of the feasible payoff region generated by a fixed strategy, $\mathbf{p}$, of $X$. For two players in an iterated game with parameters $\left(R,S,T,P\right)$, the space of all possible payoffs $\left(\pi_{Y},\pi_{X}\right)$ is the convex hull of the four points $\left(R,R\right)$, $\left(S,T\right)$, $\left(T,S\right)$, and $\left(P,P\right)$. However, once $X$'s strategy is fixed at $\mathbf{p}$ (while leaving $Y$ free to vary), the set of attainable payoffs is a subset of the full payoff region (\fig{feasible_regions}\textbf{a-e}). This feasible region, denoted $\mathcal{C}\left(\mathbf{p}\right)$, is the convex hull of at most 11 points \citep{mcavoy}. Controlling the payoffs amounts to constraining $\mathcal{C}\left(\mathbf{p}\right)$. If $\mathbf{p}$ is a ZD strategy, $\mathcal{C}\left(\mathbf{p}\right)$ is simply a line (\fig{feasible_regions}\textbf{d,e}).

\begin{figure}
\centering
\includegraphics[width=\linewidth]{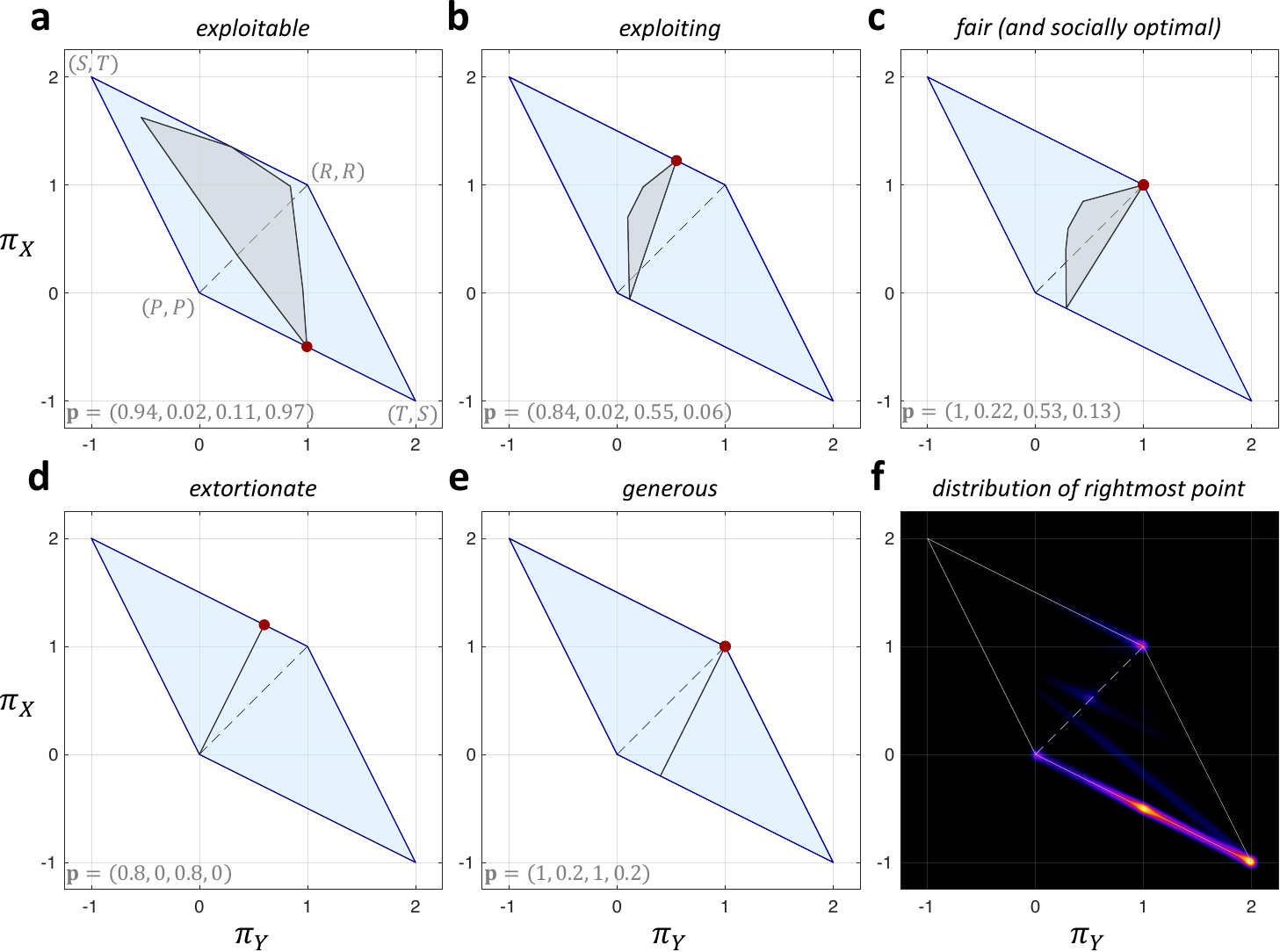}
\caption{\textbf{Feasible payoff regions in an iterated prisoner's dilemma.} The game associated with each panel is an iterated prisoner's dilemma with $\left(R,S,T,P\right) =\left(1,-1,2,0\right)$. In panels \textbf{a-e}, the blue region represents all possible payoffs for the game. The shaded region or solid line in the interior represents the feasible region $\mathcal{C}\left(\mathbf{p}\right)$ once $X$'s strategy $\mathbf{p}$ is fixed (while leaving $Y$ free to vary). The red point highlights the rightmost point of $\mathcal{C}\left(\mathbf{p}\right)$, i.e. the payoff pair corresponding to $Y$'s selfishly optimal strategy. In panels \textbf{a-c}, $X$'s strategies are general memory-one, and can be broadly characterized as ``exploitable,'' ``exploiting,'' or ``fair'' based on whether the rightmost point of $\mathcal{C}\left(\mathbf{p}\right)$ lies in the region $\pi_{X}<\pi_{Y}$, $\pi_{X}>\pi_{Y}$, or $\pi_{X}=\pi_{Y}$. Panel \textbf{d} shows the feasible region of an extortionate ZD strategy with $\kappa =P=0$, $\chi =2$, and $\phi =0.2$, which enforces the constraint $\pi_{X}=2\pi_{Y}$ (\eq{ZD_formula}). It is ``extortionate'' because the entire feasible line lies in the region $\pi_{X}\geqslant\pi_{Y}$, and any improvement in $Y$'s payoff provides an even greater increase in $X$'s payoff. Panel \textbf{e} depicts the feasible region of a generous ZD strategy with $\kappa =R=1$, $\chi =2$, and $\phi =0.2$, which enforces the constraint $1-\pi_{X}=2\left(1-\pi_{Y}\right)$. Providing a benign counterpart to extortion, generous strategies result in $\pi_{Y}\geqslant\pi_{X}$ for all strategies of $Y$. The rightmost point aligns $Y$'s selfish optimum with the fair and socially optimal payoffs of mutual cooperation. Finally, panel \textbf{f} depicts a heatmap representing the distribution of the rightmost point of $\mathcal{C}\left(\mathbf{p}\right)$ for $10^{6}$ memory-one strategies $\mathbf{p}$ of $X$, with each coordinate sampled independently from an arcsine distribution (which is used to efficiently explore the corners of the unit interval, $\left[0,1\right]$, and dates back to at least \citet{nowak:WSLS} in iterated games). Broadly speaking, our concern here is with whether a selfish $Y$ can actually make it to the red point.\label{fig:feasible_regions}}
\end{figure}

\subsection{Robustness of selfish optimization}\label{sec:robustness}
From $X$'s perspective, the use of such strategies is most effective when it influences $Y$'s actual behavior in the desired manner, which motivates one to study the robustness of these payoff control methods in practice. A common practical context is one in which the co-player $Y$ seeks to maximize its payoff by learning the best strategy, following an optimization procedure or learning rule. Indeed, interacting with such a selfish opponent concerned only with its own payoff, rather than shared benefit, is one of the situations where payoff control would be most desirable (for instance, a generous strategy in an IPD with $S+T<2R$ aligns $Y$'s own selfish optimum with the jointly beneficial payoff of mutual cooperation). Press and Dyson conjecture in \citep{pressdyson}, on the basis of numerical evidence, that when facing an extortionate ZD strategy in an IPD with $2P<S+T<2R$, the selfish learner $Y$ always has an evolutionary path leading to a globally payoff-maximizing strategy. The term ``evolutionary path'' in this context refers to a trajectory through $Y$'s strategy space, the unit four-dimensional hypercube, such that $Y$ is locally following the path of steepest ascent in payoff (with some learning rate).

\citet{chenzinger} formally prove a stronger version of this conjecture that applies to all pcZD strategies, not just extortionate ones, in any IPD with $2P<S+T<2R$ \citep{chenzinger}. They generalize Press and Dyson's notion of an ``evolutionary path'' to what they term an ``adapting path,'' formally defined in Section 3 of \citep{chenzinger}. Essentially, an adapting path is a smooth map from some time parameter to the hypercube, representing $Y$'s strategy as a function of time, such that \emph{(i)} $Y$'s payoff strictly increases with time and \emph{(ii)} in finite time, $Y$ reaches an endpoint that is at least locally optimal. Chen and Zinger's robustness result can be summarized as follows:
\begin{quote}
For a pcZD strategy of $X$ in an IPD with $2P\! <\! S+T\! <\! 2R$, all adapting paths of $Y$ lead to the maximum payoff attainable against $X$ (corresponding to cooperation by $Y$).
\end{quote}

\citet{chenzinger} note that the truth of the original conjecture of \citet{pressdyson}, regarding the existence of payoff-maximizing adapting paths for $Y$, does not provide $X$ enough assurance that $Y$ will behave in the desired manner. $Y$ could potentially follow some other path leading to a local (but not global) optimum. Their proof allays these worries in any IPD with $2P<S+T<2R$ by establishing that \textit{all} adapting paths for $Y$ will reach the desired behavior.

However, once the restriction $2P<S+T<2R$ is removed, their proof no longer holds, as we discuss in Appendix A. In the next section, we demonstrate through counterexamples that no other proof would suffice; the robustness result itself simply does not extend to more general cases. We demonstrate how robustness can fail to hold for pcZD strategies in IPD games with $S+T>2R$ or $S+T>2P$. We also consider the general memory-one case where $X$ is no longer using a ZD strategy.

\section{Methods and results}\label{sec:methods_and_results}
The learning rule we use for the selfish player $Y$ is projected gradient ascent (PGA), a local hill-climbing algorithm \citep{haskell:QAM:1944}. This is a natural choice, given its simplicity and widespread use in optimization. Press and Dyson's conjecture is based on numerical simulations using versions of gradient ascent \citep{pressdyson}. Furthermore, this type of localized algorithm is particularly useful in more general contexts. If $Y$ is aware that $X$ is using a pcZD strategy in an IPD with $S+T<2R$, then $Y$ simply has to repeatedly cooperate to achieve the maximum payoff. However, in situations where $S+T>2R$, $Y$'s optimal strategy is not generally repeated cooperation. Rather, it depends on the specific implementation of $X$'s pcZD strategy, which includes not only the slope, $\chi$, but also $\phi$, which $Y$ might not have access to. In such settings, local information, including an estimated payoff gradient at the current strategy, becomes all the more useful for optimization. This motivation is relevant even in IPDs with $S+T<2R$ when $X$ is using a general memory-one rather than a pcZD strategy, in which case $Y$'s optimal strategy is not necessarily known from the outset.

As a selfish learner, $Y$'s only goal is to optimize $\pi_{Y}$. When the game parameters, $\left(R,S,T,P\right)$, and $X$'s strategy, $\mathbf{p}$, are fixed, the only free parameters on which $\pi_{Y}$ depends are those of $Y$'s strategy, $\mathbf{q}$. Therefore, against a fixed opponent in an undiscounted game, the parameter space for optimization is the four-dimensional unit hypercube, $\left[0,1\right]^{4}$, when $Y$ is a memory-one player. Qualitatively, $Y$ begins at a random initial strategy, and at each time step takes a small ``step'' in the direction of the gradient at its current location. Since $\pi_{Y}\left(\mathbf{p},\mathbf{q}\right)$ is defined over a compact parameter space, we ``project'' $Y$'s strategy to the closest point in the domain after each gradient step, ensuring that the new strategy components are valid probabilities \citep{calamai:MP:1987}. Given an initial strategy, $\mathbf{q}_{0}$, and a scalar learning rate, $\eta$, projected gradient ascent (PGA) is defined by the update rule
\begin{align}
\mathbf{q}_{n+1} = \textrm{proj}\left( \mathbf{q}_{n} + \eta \nabla_{\mathbf{y}} \pi_{Y}\left(\mathbf{p},\mathbf{y}\right)\vert_{\mathbf{y}=\mathbf{q}_{n}} \right) ,
\end{align}
where $\nabla$ denotes the gradient operator, and the function $\textrm{proj}\left(\mathbf{x}\right)$ projects a point $\mathbf{x}\in\mathbb{R}^{m}$ onto the unit $m$-dimensional hypercube $\left[0,1\right]^{m}$. Componentwise, for each $i=1,\dots ,m$, $\textrm{proj}\left(x_{i}\right) =1$ if $x_{i}>1$, $\textrm{proj}\left(x_{i}\right) =x_{i}$ if $0\leqslant x_{i}\leqslant 1$, and $\textrm{proj}\left(x_{i}\right) =0$ if $x_{i}<0$.

The key question relevant to robustness is the following: after sufficiently many learning steps, can we be assured that $Y$'s final ``optimized'' strategy globally maximizes $\pi_{Y}$? Before demonstrating that we cannot have this assurance, we first establish some general properties of stable points of PGA. Due to the projection operator, there are two types of stable points: those with vanishing gradient and those without. For the latter, the gradient at a boundary point can push $Y$'s strategy out of the hypercube, only to be projected back to the same boundary point.

We first explore the conditions under which the gradient vanishes. For fixed strategies $\mathbf{p}$ and $\mathbf{q}$, consider the vector space of game parameters for which the gradient of $\pi_{Y}$ vanishes at $\mathbf{q}$,
\begin{align}
V\left(\mathbf{p},\mathbf{q}\right) &\coloneqq \left\{\begin{pmatrix}R \\ S \\ T \\ P\end{pmatrix}\in\mathbb{R}^{4}\ :\ \nabla_{\mathbf{y}}\pi_{Y}\left(\mathbf{p},\mathbf{y}\right)\vert_{\mathbf{y}=\mathbf{q}} =0\right\} . \label{eq:V_def}
\end{align}
The vector space of game parameters for which $\mathbf{p}$ acts as an equalizer strategy is
\begin{align}
E\left(\mathbf{p}\right) &\coloneqq \left\{\begin{pmatrix}R \\ S \\ T \\ P\end{pmatrix}\in\mathbb{R}^{4}\ :\ \pi_{Y}\left(\mathbf{p},\mathbf{q'}\right)\textrm{ is independent of }\mathbf{q'}\in\left[0,1\right]^{4}\right\} ,
\end{align}
which is naturally a subspace of $V\left(\mathbf{p},\mathbf{q}\right)$. Our first result concerns when these two spaces coincide:
\begin{theorem}\label{thm:equalizer}
If $\mathbf{p}\in\left[0,1\right]^{4}\setminus\left\{\left(1,1,0,0\right)\right\}$, then $V\left(\mathbf{p},\mathbf{q}\right) =E\left(\mathbf{p}\right)$ whenever $\mathbf{q}\in\left(0,1\right)^{4}$.
\end{theorem}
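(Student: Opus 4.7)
The easy inclusion $E(\mathbf{p})\subseteq V(\mathbf{p},\mathbf{q})$ is immediate: if $\pi_{Y}(\mathbf{p},\mathbf{q}')$ is constant in $\mathbf{q}'$ then its gradient vanishes everywhere, in particular at $\mathbf{q}$. For the reverse inclusion, my plan is to differentiate $\pi_{Y}=\langle\nu(\mathbf{q}),\mathbf{r}_{Y}\rangle$ (with $\mathbf{r}_{Y}=(R,T,S,P)^{\T}$) using the standard perturbation formula for stationary distributions of finite irreducible Markov chains. The hypothesis $\mathbf{p}\neq(1,1,0,0)$ together with $\mathbf{q}\in(0,1)^{4}$ ensures, as noted at the end of the Model section, that $\nu$ is unique and a smooth function of $\mathbf{q}$.

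The key observation is that each $\partial_{q_{xy}}M$ is a rank-one matrix $\mathbf{e}_{s(xy)}\mathbf{v}_{xy}^{\T}$, where $s(xy)$ denotes the unique row of $M$ containing $q_{xy}$ and
\begin{equation*}
\mathbf{v}_{xy}\;=\;p_{s(xy)}(1,-1,0,0)^{\T}+\bigl(1-p_{s(xy)}\bigr)(0,0,1,-1)^{\T}
\end{equation*}
depends only on $\mathbf{p}$. Combining this with $\partial_{q_{xy}}\nu=\nu\,(\partial_{q_{xy}}M)\,Z$, where $Z=(I-M+\mathbf{1}\nu)^{-1}$ is the fundamental matrix, yields the clean formula
\begin{equation*}
\partial_{q_{xy}}\pi_{Y}\;=\;\nu_{s(xy)}\,\bigl\langle\mathbf{v}_{xy},\,Z\mathbf{r}_{Y}\bigr\rangle .
\end{equation*}
The vanishing of the full gradient at an interior $\mathbf{q}$ therefore forces $Z\mathbf{r}_{Y}\in V(\mathbf{p})^{\perp}$, where $V(\mathbf{p}):=\mathrm{span}\{\mathbf{v}_{xy}\}\subseteq\mathrm{span}\{(1,-1,0,0)^{\T},(0,0,1,-1)^{\T}\}$ is a subspace that depends only on $\mathbf{p}$.

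The last step is to translate $Z\mathbf{r}_{Y}\in V(\mathbf{p})^{\perp}$ back into a $\mathbf{q}$-independent linear condition on $\mathbf{r}_{Y}$ and match it against $E(\mathbf{p})$. I apply $Z^{-1}=I-M+\mathbf{1}\nu$ to each basis vector of $V(\mathbf{p})^{\perp}$, using the identities $M(1,1,0,0)^{\T}=\mathbf{p}$ and $M\mathbf{1}=\mathbf{1}$. For the canonical generators of the two-dimensional subspace $\mathrm{span}\{(1,1,0,0)^{\T},\mathbf{1}\}\subseteq V(\mathbf{p})^{\perp}$, this produces $\mathrm{span}\{(1,1,0,0)^{\T}-\mathbf{p},\mathbf{1}\}$, which is precisely the Press-Dyson equalizer locus in the generic case that $V(\mathbf{p})$ is two-dimensional. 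When $V(\mathbf{p})$ collapses to a line (e.g.\ all four entries of $\mathbf{p}$ coincide), $V(\mathbf{p})^{\perp}$ grows to dimension three, but $E(\mathbf{p})$ grows in lockstep; in those regimes $M$ annihilates the extra directions of $V(\mathbf{p})^{\perp}$, so the image under $Z^{-1}$ remains $\mathbf{q}$-independent and still agrees with $E(\mathbf{p})$.

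\textbf{The main obstacle} is verifying this $\mathbf{q}$-independence in full generality: a priori $Z$ varies nontrivially with $\mathbf{q}$, so the cancellation that makes $Z^{-1}V(\mathbf{p})^{\perp}$ independent of $\mathbf{q}$ --- namely that every $\mathbf{q}$-dependent residue produced by $(I-M+\mathbf{1}\nu)$ acting on $V(\mathbf{p})^{\perp}$ lies in the $\mathbf{1}$-direction and gets absorbed --- must be checked case by case according to the rank structure of $V(\mathbf{p})$. The excluded strategy $\mathbf{p}=(1,1,0,0)$ is precisely the case in which this program breaks at the very first step: there $M$ has two disjoint closed classes for every $\mathbf{q}\in(0,1)^{4}$, $\nu$ is not unique, and the derivative formula above loses meaning.
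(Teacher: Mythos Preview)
Your approach is genuinely different from the paper's and, in the irreducible regime, more conceptual. The paper expresses $V(\mathbf{p},\mathbf{q})$ as the kernel of an explicit $4\times 4$ matrix built from the Press--Dyson determinant formula, verifies (with computer-algebra support) that all $3\times 3$ minors vanish and characterises when the $2\times 2$ minors vanish, and then matches dimensions against $E(\mathbf{p})$ via an auxiliary finite-test space $B_{\varepsilon}(\mathbf{p})$ together with a monotonicity argument showing $B_{\varepsilon}=E$. Your fundamental-matrix route instead delivers the identification $V(\mathbf{p},\mathbf{q})=\mathrm{span}\{(1,1,0,0)^{\T}-\mathbf{p},\,\mathbf{1}\}$ in $\mathbf{r}_{Y}$-coordinates directly, and the inclusion of this span in $E(\mathbf{p})$ is immediate from $\left<\nu ,(1,1,0,0)^{\T}-\mathbf{p}\right>=\left<\nu ,(1,1,0,0)^{\T}\right>-\left<\nu M,(1,1,0,0)^{\T}\right>=0$. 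Incidentally, what you flag as the ``main obstacle'' is not one: $Z^{-1}\mathbf{1}=\mathbf{1}$ and $Z^{-1}(1,1,0,0)^{\T}=(1,1,0,0)^{\T}-\mathbf{p}+(\nu_{CC}+\nu_{CD})\mathbf{1}$, so the only $\mathbf{q}$-dependent residue is always a multiple of $\mathbf{1}$ and is absorbed automatically.

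The real gap is earlier, in the passage from $\nu_{s(xy)}\left<\mathbf{v}_{xy},Z\mathbf{r}_{Y}\right>=0$ for all $xy$ to $Z\mathbf{r}_{Y}\in V(\mathbf{p})^{\perp}$. That step requires $\nu_{s(xy)}>0$ for every $xy$, i.e.\ that the chain is irreducible. For $\mathbf{q}\in(0,1)^{4}$ this fails exactly when $p_{CC}=p_{CD}=1$ (making $\{CC,CD\}$ a closed class, so $\nu_{DC}=\nu_{DD}=0$) or $p_{DC}=p_{DD}=0$ (making $\{DC,DD\}$ closed). These are two of the three degenerate families in which the paper finds $\dim E(\mathbf{p})=3$, but your criterion ``$V(\mathbf{p})$ collapses to a line'' does \emph{not} detect them: for $p_{CC}=p_{CD}=1$ with generic $p_{DC},p_{DD}$, the four vectors $\mathbf{v}_{xy}$ still span the full two-plane $\mathrm{span}\{(1,-1,0,0)^{\T},(0,0,1,-1)^{\T}\}$, so your recipe returns a two-dimensional $Z^{-1}V(\mathbf{p})^{\perp}$ and hence the wrong dimension for $V(\mathbf{p},\mathbf{q})$. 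What actually happens is that only the constraints with $\nu_{s(xy)}>0$ are active, and in each reducible case those active $\mathbf{v}_{xy}$ are all parallel (equal to $(1,-1,0,0)^{\T}$ or $(0,0,1,-1)^{\T}$, respectively), yielding a single linear condition and $\dim V(\mathbf{p},\mathbf{q})=3$. The repair is to organise the case split by irreducibility of the chain rather than by $\dim V(\mathbf{p})$, and to rerun the same fundamental-matrix computation with $V(\mathbf{p})$ replaced by the span of the active $\mathbf{v}_{xy}$.
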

This result means that for any strategy $\mathbf{p}$ different from ``repeat,'' if $\nabla_{\mathbf{y}}\pi_{Y}\left(\mathbf{p},\mathbf{y}\right)\vert_{\mathbf{y}=\mathbf{q}} =0$ for \emph{some} $\mathbf{q}\in\left(0,1\right)^{4}$, then $\mathbf{p}$ is an equalizer strategy and $\nabla_{\mathbf{y}}\pi_{Y}\left(\mathbf{p},\mathbf{y}\right)\vert_{\mathbf{y}=\mathbf{q}} =0$ for \emph{every} $\mathbf{q}\in\left(0,1\right)^{4}$. Therefore, a purely stochastic strategy (in the interior of the hypercube, $\mathbf{q}\in\left(0,1\right)^{4}$) does not yield a locally or globally optimal payoff, and thus cannot be an endpoint of gradient ascent, unless the opponent is playing an equalizer strategy. In fact, we can say more:
\begin{proposition}\label{prop:endpoints}
Suppose $X$ and $Y$ are general memory-one players in any iterated symmetric two-player two-action game, $\mathbf{p}$ is a fixed strategy of $X$, and $\mathbf{q}_{\text{final}}$ is $Y$'s optimized strategy following a projected gradient trajectory starting from any initial strategy, $\mathbf{q}_{0}$. Unless $\mathbf{p}$ is an equalizer strategy (the degenerate case corresponding to a flat payoff landscape), $\mathbf{q}_{\text{final}}$ is constrained as follows:
\begin{enumerate}

\item[(a)] For all $\mathbf{p}\in\left[0,1\right]^{4}\setminus\left\{\left(1,1,0,0\right)\right\}$, $\mathbf{q}_{\text{final}}$ has at least one deterministic component;

\item[(b)] If $\mathbf{p}$ is randomly sampled from $\left[0,1\right]^{4}$ (with each of the four components independent), then with probability one, $\mathbf{q}_{\text{final}}$ is one of the following: $\left(1,1,q_{DC},q_{DD}\right)$, $\left(q_{CC},q_{CD},0,0\right)$, or a strategy with all four components deterministic.

\end{enumerate}
\end{proposition}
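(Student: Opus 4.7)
The plan is to exploit the factored structure of $\partial\pi_{Y}/\partial q_{yx}$ together with \thm{equalizer} and a natural restriction of that theorem to faces of the hypercube. The starting observation is that at any PGA endpoint $\mathbf{q}_{\text{final}}$, stationarity requires $\partial\pi_{Y}/\partial q_{i}\vert_{\mathbf{q}_{\text{final}}}=0$ for every free coordinate $q_{i}\in(0,1)$, while each clamped coordinate $q_{i}\in\{0,1\}$ must have a partial derivative of the appropriate sign to be held in place by the projection. Perturbing the stationary equation $\nu(I-M)=0$ with respect to a single $q_{yx}$ shows that $\nu\,(\partial M/\partial q_{yx})$ is supported only in row $(x,y)$ of $M$ and is proportional to the stationary probability $\nu(x,y)$; consequently $\partial\pi_{Y}/\partial q_{yx}$ carries $\nu(x,y)$ as a prefactor and therefore vanishes identically whenever the state $(x,y)$ is transient under $(\mathbf{p},\mathbf{q})$.

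Part (a) follows immediately from \thm{equalizer}. If $\mathbf{q}_{\text{final}}\in(0,1)^{4}$, then all four partial derivatives vanish, so $(R,S,T,P)\in V(\mathbf{p},\mathbf{q}_{\text{final}})=E(\mathbf{p})$, meaning $\mathbf{p}$ is an equalizer. Contrapositively, if $\mathbf{p}\neq(1,1,0,0)$ is not an equalizer, then at least one coordinate of $\mathbf{q}_{\text{final}}$ must lie in $\{0,1\}$.

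For part (b), the two special families can be identified directly from the absorbing structure of $Y$'s action induced by $\mathbf{q}$. Under $\mathbf{q}=(1,1,q_{DC},q_{DD})$, $Y$ cooperates in every round following one in which it cooperated, so the states $(C,D)$ and $(D,D)$ are transient; by the prefactor observation above, $\partial\pi_{Y}/\partial q_{DC}$ and $\partial\pi_{Y}/\partial q_{DD}$ vanish identically, and only the sign conditions $\partial\pi_{Y}/\partial q_{CC}\geq 0$ and $\partial\pi_{Y}/\partial q_{CD}\geq 0$ remain to be checked at the clamped coordinates, which hold on a positive-measure subset of $\mathbf{p}$. The dual argument with $y=C$ transient handles $\mathbf{q}=(q_{CC},q_{CD},0,0)$.

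To finish, I would rule out PGA endpoints whose boundary pattern matches neither special family nor full determinism. For any such pattern the chain induced by $(\mathbf{p},\mathbf{q})$ keeps each state corresponding to a free coordinate recurrent, so the prefactors $\nu(x,y)$ are strictly positive and do not force automatic vanishing; the stationarity equations then reduce, after cancelling these nonzero prefactors, to a polynomial system in $(\mathbf{p},\mathbf{q}_{S})$ to which a version of the argument underlying \thm{equalizer} adapted to the reduced chain on the recurrent class applies, forcing $\mathbf{p}$ into a proper algebraic subset of $[0,1]^{4}$. Since there are only finitely many such non-special patterns, the union of the resulting bad $\mathbf{p}$-sets still has Lebesgue measure zero, yielding the probability-one conclusion. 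The main obstacle is carrying out this reduction cleanly across every non-special pattern and verifying nondegeneracy of the restricted polynomial system — a calculation that parallels \thm{equalizer} but must be checked face by face rather than in the interior, and where the recurrent-state hypothesis, which fails exactly on the two special families, is essential.
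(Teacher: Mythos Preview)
Your treatment of part~(a) is identical to the paper's: both invoke \thm{equalizer} directly. For part~(b), your route differs from the paper's in a useful way. The paper proceeds purely algebraically: for each boundary pattern (encoded by a subset $W\subseteq\{CC,CD,DC,DD\}$ of free coordinates and a choice of $\{0,1\}$-values for the others), it forms the matrix $M^{V_{W}}$ whose kernel is the set of $(R,S,T,P)$ making the relevant partials vanish, computes the determinants of its $2\times 2$ minors symbolically, and checks that these are not all identically zero in $\mathbf{p}$ except precisely in the two patterns $(1,1,q_{DC},q_{DD})$ and $(q_{CC},q_{CD},0,0)$, where $M^{V_{W}}$ vanishes outright. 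For $\lvert W\rvert =1$ the dimension count is too weak, so the paper instead observes that the single vanishing-derivative condition is a nontrivial multilinear constraint on $\mathbf{p}$. All of this is verified by symbolic computation.

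Your approach instead explains \emph{why} those two patterns are exceptional: in each, the free coordinates index transient states (once $Y$'s action is absorbed at $C$ or $D$), so the prefactor $\nu_{(x,y)}$ in $\partial\pi_{Y}/\partial q_{yx}$ vanishes structurally. This is a genuine conceptual addition that the paper does not make explicit. Conversely, for every other pattern (and $\mathbf{p}\in(0,1)^{4}$, which holds almost surely) the chain is irreducible, so the prefactors are strictly positive and the stationarity conditions become honest polynomial constraints on $\mathbf{p}$. That unifies the $\lvert W\rvert=3,2,1$ cases the paper treats separately.

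The gap you flag at the end is real and is exactly the same work the paper does: showing that those polynomial constraints are \emph{nontrivial} in $\mathbf{p}$ for every non-special pattern. The paper discharges this by brute-force minor computations (supplemented by a MATLAB file); your ``restricted \thm{equalizer}'' heuristic points in the right direction but does not by itself guarantee nondegeneracy on each face, and you would still need a pattern-by-pattern check. So your plan is sound and arguably more illuminating, but its completion requires the same finite case verification that the paper carries out computationally.
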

Note that when the game is a social dilemma, $\left(1,1,q_{DC},q_{DD}\right)$ and $\left(q_{CC},q_{CD},0,0\right)$ may be interpreted as repeated cooperation and repeated defection, respectively. However, we emphasize that the results above apply to any symmetric $2\times 2$ game.

The proofs of \thm{equalizer} and \prop{endpoints} are provided in Appendix B. It follows that under PGA, $Y$ is guaranteed to make it to the boundary of the hypercube regardless of its initial strategy (unless $X$ is using an equalizer strategy, in which case optimization is irrelevant). This leaves open the possibility that $Y$ reaches a locally but not globally optimal endpoint on the boundary.

Below, we demonstrate using examples from IPD games that this possibility often does occur in practice. We use a learning rate of $\eta = 10^{-2}$ and apply the strategy update rule until the magnitude of successive changes in $Y$'s payoff fall below $10^{-15}$, indicating that the algorithm has effectively converged to an endpoint. Since $Y$'s payoff is generically of the same order as $1$, and we perform computations with $16$ digits of precision, smaller changes in payoff cannot be accurately detected in practice.

\subsection{When $X$ uses a pcZD strategy}
Here, we consider the case $\kappa =P$. As a result, $X$'s strategy, $\mathbf{p}$, is parametrized in terms of $\chi$ and $\phi$ (\eq{ZD_formula}). \fig{GA_trajectories} illustrates $Y$'s PGA optimization for three different initial strategies, against a fixed strategy of $X$, $\mathbf{p} = \left(1,0.12,0.88,0\right)$, in an IPD with game parameters $\left(2,-1,7,0\right)$. In particular, these payoffs satisfy $S+T>2R$, so this game is outside of the purview of the robustness result of \citet{chenzinger}. Each of $Y$'s initial strategies leads to a different final strategy under PGA, corresponding to three distinct ``optimized'' values of $\pi_{Y}$. The strategy in \fig{GA_trajectories}\textbf{a} results in repeated mutual cooperation, which provides the lowest payoffs to both players relative to those of \fig{GA_trajectories}\textbf{b,c}. The latter two strategies allow for both $CD$ and $DC$ to occur in the long run, and alternation between these two states results in higher mutual benefit than repeated cooperation in IPDs with $S+T>2R$. While neither of these two cases yields pure alternation (i.e. $CD$ on odd rounds, $DC$ on even rounds, with $CC$ and $DD$ being eliminated in the long run), they are still superior to mutual cooperation. The strategy of \fig{GA_trajectories}\textbf{c} globally maximizes $\pi_{Y}$ (as well as $\pi_{X}$, since $\mathbf{p}$ is a pcZD strategy and thus the goals are aligned).

\begin{figure}
\centering
\includegraphics[width=\linewidth]{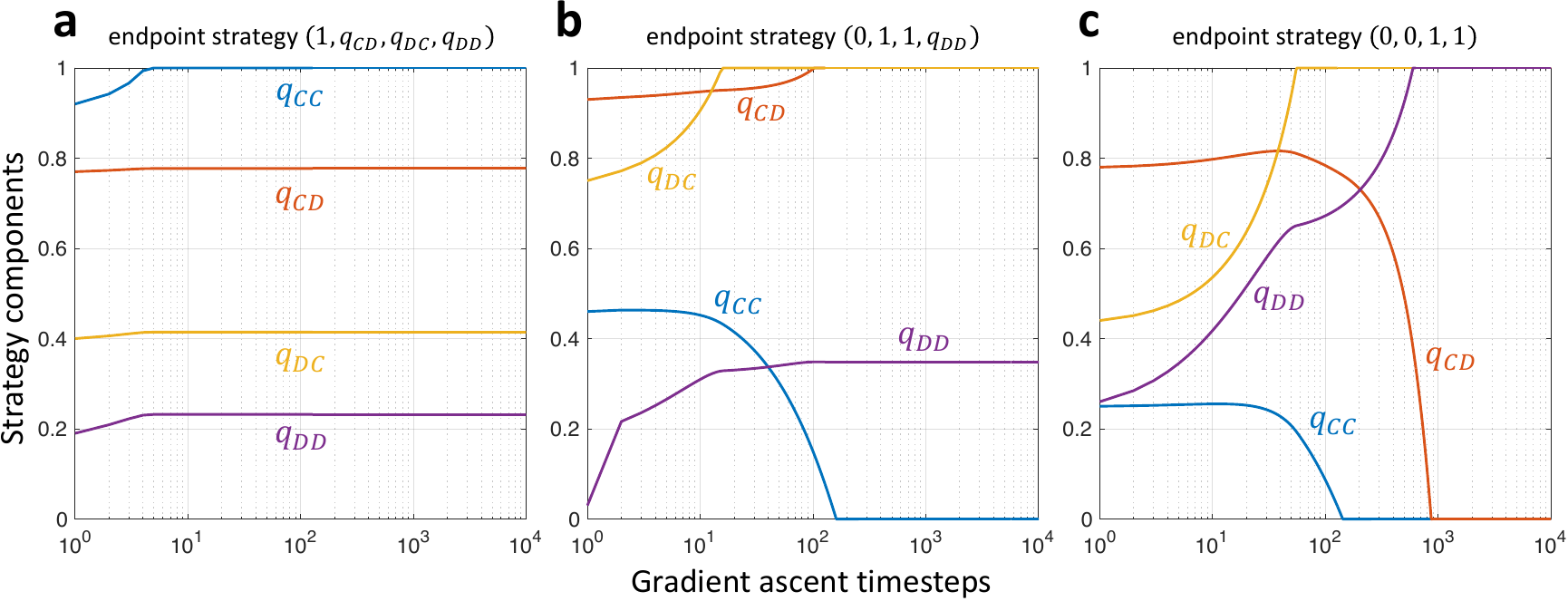}
\caption{\textbf{Gradient ascent trajectories leading to three distinct final strategies against a fair ZD opponent in an iterated prisoner's dilemma.} The game parameters are $\left(R,S,T,P\right) =\left(2,-1,7,0\right)$, satisfying $S+T>2R$. The ZD player $X$ uses strategy $\left(1,0.12,0.88,0\right)$ with $\chi =1$ and $\phi =0.11$. Each panel represents the updating of $Y$'s strategy at each time step of gradient ascent. The three different initial conditions shown here lead to distinct endpoint strategies, listed in ascending order of payoffs. Strategy \textbf{a} is effectively equivalent to unconditional cooperation: once $q_{CC}$ reaches $1$, repeated mutual cooperation becomes the Markov stationary state because $p_{CC}$ is also 1. In this situation, the values of $q_{CD}, q_{DC}, q_{DD}$ are irrelevant, so their corresponding gradient components vanish, and they stop updating. This strategy is not globally optimal in the IPD when $S+T>2R$. Strategy \textbf{c} is a pure alternator, with $D$ following $C$ and vice versa regardless of the co-player's action. When $Y$ uses this strategy, all four states of the Markov chain are in the support of the stationary distribution, with $CD$ and $DC$ most frequently represented (when the game is at either of these states, the probability of transitioning to the complementary state is 0.88). Strategy \textbf{b} also facilitates alternating behavior; however, the mechanics of how it cycles through the Markov states differs from that of case \textbf{c}, with mutual cooperation occurring more frequently than it does with strategy \textbf{c}, thereby resulting in a lower payoff. With strategy \textbf{b}, mutual defection cannot occur in the stationary state, so the value of $q_{DD}$ is irrelevant.\label{fig:GA_trajectories}}
\end{figure}

Given the existence of multiple endpoints (in both strategy and payoff space) under PGA, it would be useful to know how often the suboptimal ones occur when $Y$'s initial strategy is randomly sampled. The more frequent they are, the less likely that $X$'s attempt at payoff control by using a ZD strategy will have the desired effect. We emphasize that various strategies of $Y$ can correspond to the same payoff, so we are interested in the relative frequencies of distinct payoff-space endpoints $\left(\pi_{Y},\pi_{X}\right) \in \mathcal{C}\left(\mathbf{p}\right)$ rather than strategy-space endpoints. For example, \fig{GA_trajectories}\textbf{a} yields mutual cooperation because $q_{CC}=p_{CC}=1$, so all strategies $\mathbf{q}=\left(1,q_{CD},q_{DC},q_{DD}\right)$ with a well-defined limiting distribution against $\mathbf{p}$ correspond to just a single payoff endpoint.

To estimate the relative frequencies of distinct payoff endpoints against $\mathbf{p}$, we sample $10^{5}$ initial strategies for $Y$, with each coordinate chosen independently from an arcsine distribution. We run PGA on each of these while keeping $\mathbf{p}$ fixed at $\left(1,0.12,0.88,0\right)$. The distribution of payoff endpoints is plotted as a heatmap in \fig{distribution_ZD}\textbf{a}. While the majority ($84{,}247$) of initial strategies $\mathbf{q}_{0}$ lead to the globally optimal payoff, a non-negligible number lead to the suboptimal endpoints ($4{,}728$ and $11{,}025$ in descending order of payoffs). The same phenomenon, of distinct payoff endpoints against a fixed ZD strategy of $X$, arises in the IPD when $S+T<2P$ as well, which is illustrated in \fig{distribution_ZD}\textbf{b} with $10^{5}$ initial strategies for $Y$ sampled from an arcsine distribution.

\begin{figure}
\centering
\includegraphics[width=0.9\linewidth]{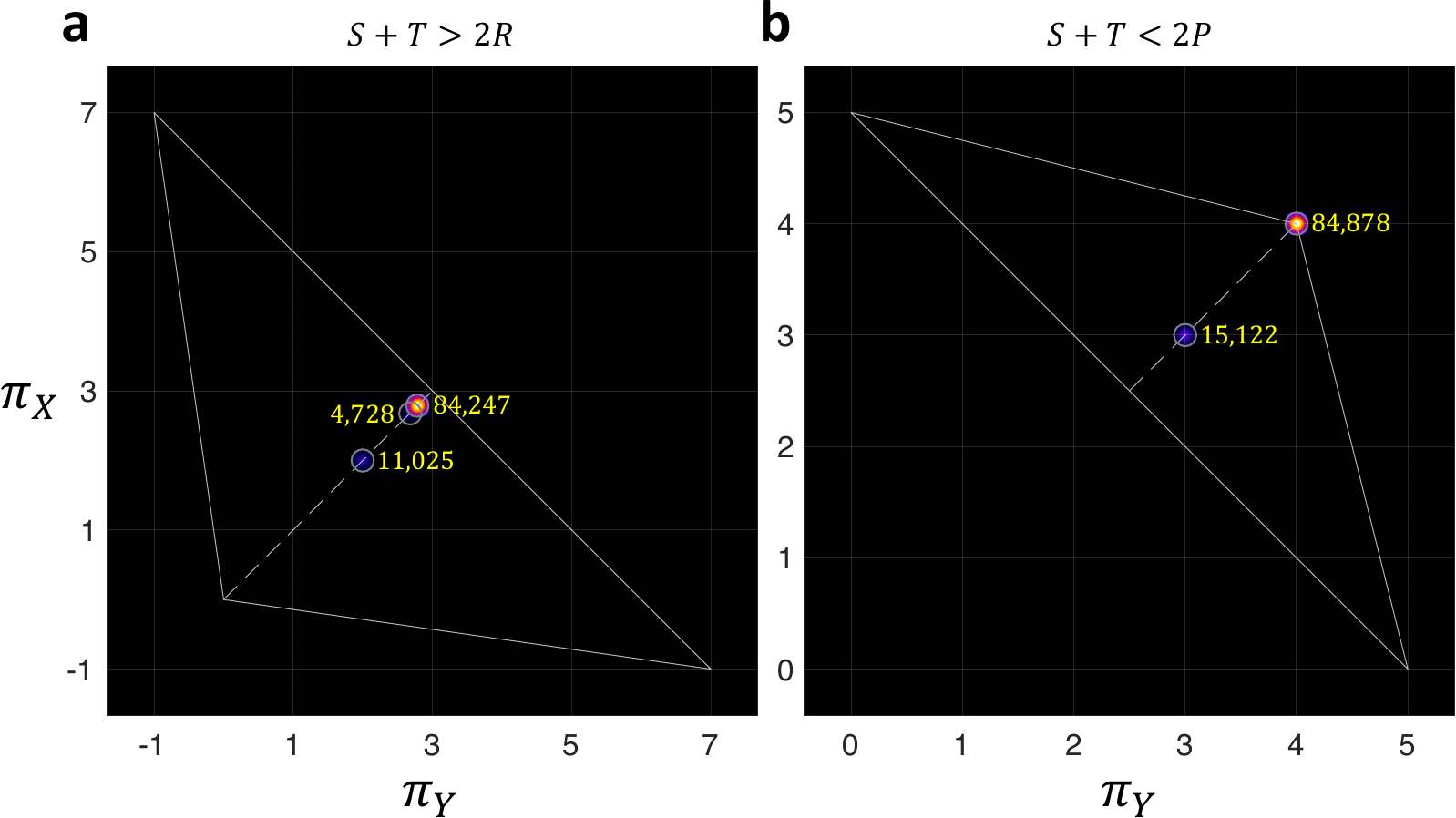}
\caption{\textbf{Distribution of final payoffs against ZD opponents in IPDs not satisfying $2P<S+T<2R$.} In each case, $X$'s strategy, $\mathbf{p}$, is a fixed, fair ZD strategy while $Y$ optimizes using projected gradient ascent. The highlighted points correspond to the final payoffs $\left(\pi_{Y}\left(\mathbf{p},\mathbf{q}_{\text{final}}\right),\pi_{X}\left(\mathbf{p},\mathbf{q}_{\text{final}}\right)\right)$. The heatmaps are plotted for the final payoffs corresponding to $10^{5}$ initial strategies of $Y$, where each coordinate is chosen independently from an arcsine distribution. Panel \textbf{a}, IPD with $\left(R,S,T,P\right) =\left(2,-1,7,0\right)$ (the same parameters and endpoint strategies depicted in \fig{GA_trajectories}). $X$'s strategy is $\mathbf{p}=\left(1,0.12,0.88,0\right)$, with $\chi =1$ and $\phi =0.11$. The unique payoff endpoints are $\left(2,2\right)$, $\left(2.67,2.67\right)$, and $\left(2.79,2.79\right)$, described qualitatively in \fig{GA_trajectories}. Panel \textbf{b}, IPD with $\left(R,S,T,P\right) =\left(4,0,5,3\right)$. $X$'s strategy is $\mathbf{p}=\left(1,0.85,0.15,0\right)$, with $\chi =1$ and $\phi =0.03$. The unique payoff endpoints correspond to defection by $Y$ (point $\left(3,3\right)$) and cooperation by $Y$ (point $\left(4,4\right)$). In both figures, the highest payoff values attained are the maximum attainable. \textit{Game parameters and strategies above are exact; payoff values have been rounded to two decimal places.}\label{fig:distribution_ZD}}
\end{figure}

The examples in \fig{distribution_ZD} each focus on a single strategy, $\mathbf{p}$, of $X$. Next, we consider how many pcZD strategies show similar behavior, i.e. have a payoff landscape that allows for distinct endpoints, each with nontrivial basins of attraction. Our simulations indicate that such landscapes are the norm rather than the exception. We consider $100$ pcZD strategies $\mathbf{p}$ parametrized in terms of $\chi$ and $\phi$, selected as follows: for each $\chi = 1,2,\dots ,20$, we take five linearly spaced values of $\phi$ between $10^{-4}$ and $0.99\phi_{\max}$, where $\phi_{\max}$ is the largest allowable value of $\phi$ for that $\chi$. Against each $\mathbf{p}$, we run PGA on $100$ random initial strategies $\mathbf{q}_{0}$ of $Y$. For the payoffs $\left(R,S,T,P\right) =\left(2,-1,7,0\right)$ from before, satisfying $S+T>2R$, we find that multiple endpoints are discovered in $71$ out of $100$ of these strategies $\mathbf{p}$. In each such case, there are exactly two distinct endpoints (rounded to two decimal points). The appearance of such endpoints is particularly noteworthy in these simulations: the number of samples of $\mathbf{q}_{0}$ is small, just $100$, yet they lead to suboptimal endpoints with non-negligible frequency, averaging $\approx\! 8\%$ among those $71$ strategies. With $\left(R,S,T,P\right) =\left(3,-1,9,0\right)$ (which still satisfies $S+T>2R$), we find that multiple endpoints are discovered in $78$ runs, and among these runs the suboptimal outcome is found $\approx\! 15\%$ of the time, on average. The results are even more striking when $S+T<2P$. With game parameters $\left(R,S,T,P\right) =\left(4,0,5,3\right)$, multiple endpoints are discovered in all $100$ runs. Notably, for each $\mathbf{p}$, there are always trajectories that lead to the payoff for mutual defection, $P$. This suboptimal outcome has frequency averaging $\approx\! 15\%$.

In all runs of our iterated prisoner's dilemma with $S+T>2R$ or $S+T<2P$, the globally maximizing payoff point $\left(\pi_{Y}^{\max},\pi_{X}^{\max}\right)$ always appears and has a higher frequency than the suboptimal endpoint. When $S+T>2R$, this globally optimal point depends on $\phi$ in addition to $\chi$. (This is due to the manner in which violation of $S+T<2R$ affects the boundaries of the total payoff region; in particular, repeated cooperation is no longer automatically a vertex of $\mathcal{C}\left(\mathbf{p}\right)$.) For a given slope, $\chi$, different values of $\phi$ result in numerically different optimal payoffs, and therefore a different length of the (linear) feasible region, $\mathcal{C}\left(\mathbf{p}\right)$. Unsurprisingly, this dependence on $\phi$ is not present when $S+T<2P$ since the optimal payoff can still be achieved through cooperation.

\subsection{When $X$ uses a general memory-one strategy}
As noted in Appendix A of \citet{chenzinger} and shown visually in our \fig{feasible_regions}, when $X$'s strategy is no longer ZD, $Y$'s payoff-maximizing strategy need not be repeated cooperation even if $2P<S+T<2R$. Once the feasible region is no longer linear, it is perhaps less tempting to assume that $Y$ always reaches the global maximum (even in the usual IPD with $S+T<2R$), although this assumption does crop up in the literature \citep{payoffcontrol}. Investigating the general case is important because ZD strategies impose a particularly strict requirement on the shape of $\mathcal{C}\left(\mathbf{p}\right)$. $X$ may desire more flexibility in how it constrains $\mathcal{C}\left(\mathbf{p}\right)$ and choose a general memory-one strategy with the desired properties. Even in contexts besides payoff control, it is beneficial for player $Y$ to know about the reliability of selfish optimization when faced with a general memory-one (or even memory-$n$) player, $X$.

When $X$'s strategy $\mathbf{p}$ is memory-one but not ZD, we are no longer within the purview of the robustness result of \citet{chenzinger}. In this case, for simplicity, we revert to the common assumption that $2P<S+T<2R$ and use the game parameters $\left(3,0,5,1\right)$. Here, too, we find that payoff landscapes allow for multiple endpoints under PGA, with no more than three distinct endpoints (when rounded to two decimal points) in any of the payoff landscapes corresponding to $1{,}000$ arcsine-sampled strategies $\mathbf{p}$. Unlike in the case of ZD strategies when either $S+T>2R$ or $S+T<2P$, such landscapes appear to be in the minority, yet they occur often enough to be notable. Moreover, when such landscapes do occur, they are typically more pathological than those corresponding to ZD strategies. Suboptimal endpoints occur with higher relative frequency and can be further away from true optimality. \fig{distribution_general} shows two strategies $\mathbf{p}$ representing particularly pathological cases. In both examples of \fig{distribution_general}, the \textit{least} rewarding endpoint attained by $Y$ has the \textit{highest} relative frequency ($\approx\! 61\%$ in \fig{distribution_general}\textbf{a} and $\approx\! 44\%$ in \fig{distribution_general}\textbf{b}).

\begin{figure}
\centering
\includegraphics[width=0.9\linewidth]{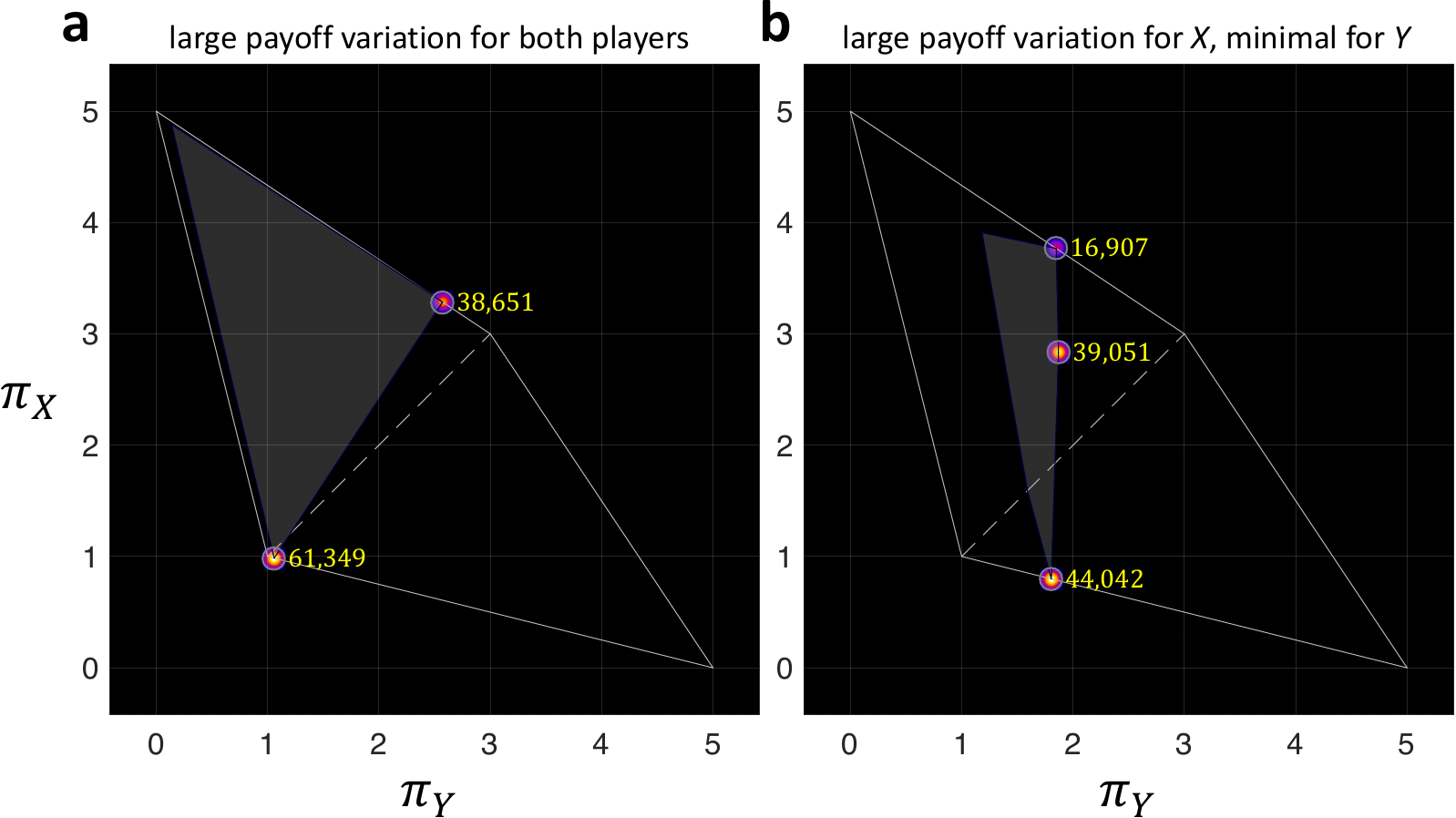}
\caption{\textbf{Distribution of final payoffs against general memory-one opponents in an iterated prisoner's dilemma.} In both panels, the game parameters are $\left(R,S,T,P\right) =\left(3,0,5,1\right)$, satisfying $2P<S+T<2R$. $X$ uses a non-ZD memory-one strategy, which, in each example, is fixed throughout. The highlighted points correspond to the final payoffs $\left(\pi_{Y}\left(\mathbf{p},\mathbf{q}_{\text{final}}\right),\pi_{X}\left(\mathbf{p},\mathbf{q}_{\text{final}}\right)\right)$ obtained after $Y$ optimizes using gradient ascent. The heatmaps are plotted for the final payoffs corresponding to $10^{5}$ initial strategies of $Y$, with each coordinate sampled independently from an arcsine distribution. In \textbf{a}, $X$'s strategy is $\mathbf{p}=\left(0.997,0.005,0.018,0.015\right)$. The payoff endpoints correspond to $Y$'s repeated defection (point $\left(1.06,0.99\right)$) and repeated cooperation (point $\left(2.57,3.29\right)$). The differences in the payoffs between these two points is quite large for both players, $1.51$ for $Y$ and $2.30$ for $X$. In \textbf{b}, $X$'s strategy is $\mathbf{p}=\left(0.860,0,0.225,0.252\right)$. The payoff endpoints correspond to $Y$'s repeated defection (point $\left(1.81,0.80\right)$), repeated cooperation (point $\left(1.85,3.77\right)$, and win-stay, lose-shift (point $\left(1.87,2.83\right)$). While the difference in $Y$'s payoffs across these points is relatively small, the range of $X$'s payoffs is $2.97$, which is nearly $R-S=3$, the difference to $X$ between the mutually cooperative state $CC$ and the state $CD$ in which $X$ is fully exploited by $Y$. In both panels, the maximum attainable payoff for $Y$ is attained, but the least optimal endpoint attained by $Y$ has the highest relative frequency. \textit{Game parameters and strategies above are exact; payoff values have been rounded to two decimal places.}\label{fig:distribution_general}}
\end{figure}

Furthermore, the endpoint separations in \fig{distribution_general} are quite large. The distance between endpoints has important implications for both players. From $Y$'s perspective, a larger distance makes the outcome of selfish optimization more uncertain. In \fig{distribution_general}\textbf{a}, both players are more likely to receive a significantly suboptimal payoff when $Y$ optimizes selfishly. The implications for $X$ are most vivid in \fig{distribution_general}\textbf{b}, where $\pi_{Y}$ is nearly the same at all three endpoints, but $\pi_{X}$ varies significantly. The range of $\pi_{X}$ is nearly $R-S$, the difference between the mutually cooperative state $CC$ and the state $CD$ where $X$ is exploited by $Y$. Selfish optimization would thus safely bring $Y$ close to optimality and leave $X$'s outcome vulnerable to $Y$'s initial strategy. From the perspective of payoff control, strategies like the ones depicted in \fig{distribution_general} would be highly undesirable for $X$.

Stated slightly differently, the same strategy of $X$ can appear qualitatively different to $Y$ depending on the latter's initial strategy. We can broadly characterize $X$ as ``exploitable,'' ``exploiting,'' or ``fair'' based on whether the rightmost point of $\mathcal{C}\left(\mathbf{p}\right)$ (i.e. $Y$'s optimal point) lies in the region $\pi_{X}<\pi_{Y}$, $\pi_{X}>\pi_{Y}$, or $\pi_{X}=\pi_{Y}$ (\fig{feasible_regions}). When $Y$'s selfish optimization results in multiple endpoints, they need not all lie in the same region, as shown in \fig{distribution_general}. Therefore, while we believe that the geometric characterization depicted in \fig{feasible_regions} is a meaningful way to think about $X$'s strategy, the underlying assumption is that a selfish opponent $Y$ uses a sufficiently sophisticated method to attain the optimal (red) payoff point. Gradient ascent is evidently not such a method.

\subsection{Variations on the model and future directions}

\subsubsection{Noisy optimization and random search}
Moving forward, one might hope to improve the learning rule by adding noise, rather than deterministically following the payoff gradient. Preliminary simulations suggest that noisier procedures are, on the whole, more successful at finding global optima. However, significant room for improvement remains and the same qualitative issues arise as in the deterministic case. For these simulations, the learning rule we use is local random search (LRS), which is defined as follows. The selfish player $Y$ first samples a new strategy by choosing each coordinate $i$ uniformly in the interval $\left(q_{i}-\varepsilon ,q_{i}+\varepsilon\right)$ for small $\varepsilon >0$, truncating at $0$ and $1$ so that the sampled strategy remains viable. If this strategy yields a higher payoff, $Y$ switches to the new strategy. (In practice, we need the difference in payoffs from the old strategy to the new strategy to be greater than $10^{-15}$ in order to detect this payoff increase.) A larger sampling neighborhood corresponds to more noise. Given the stochasticity of sampling with LRS, this projected strategy is sometimes $\left(1,1,0,0\right)$, at which the Markov stationary distribution is not unique (a problem we do not encounter with PGA updating). We avoid this issue by using a discounting factor of $\lambda = 0.9999 < 1$ to approximate an iterated game with infinite time horizon. We consider the optimization process to have terminated when there has been no update in $10^{4}$ steps.

For $100$ general memory-one strategies $\mathbf{p}$ of $X$, each of which was previously found to result in multiple endpoints when $Y$ uses PGA, we consider the performance of LRS with increasing amounts of noise. We use ten equally-spaced values of $\varepsilon$ from $10^{-2}$ to $0.5$ (inclusive). For each $\mathbf{p}$, we run LRS for all ten values of $\varepsilon$ on a set of $1{,}000$ initial $Y$ strategies. Our performance metric for each $\varepsilon$ is the expected payoff $\left<\pi_{Y}\right>_{Y}$, averaged over all $1{,}000$ runs of $Y$.

For each strategy, we thus increase the amount of noise, represented by $\varepsilon$, by $\left(0.5-10^{-2}\right) /9$ a total of nine times. Across all $100$ strategies for $X$, this results in $900$ total noise increments. We find that $\approx 49\%$ of these noise increments result in higher values of $\left<\pi_{Y}\right>_{Y}$, while $\approx 51\%$ result in a non-positive change. Approximately $16\%$ of noise increments actually decrease $\left<\pi_{Y}\right>_{Y}$. This negative behavior is not restricted to a few strategies of $X$ with ``bad'' payoff landscapes; $55\%$ of $X$'s strategies show one or two instances of $\left<\pi_{Y}\right>_{Y}$ decreasing with a noise increment, while $22\%$ of $X$'s strategies show three or more instances of this pathology. Therefore, while noise statistically improves performance overall, this cannot be guaranteed in individual cases and the reliability of optimization is not significantly improved by LRS.

\subsubsection{Trembling hands and implementation errors}
A more promising variation on the model is to include noise in the implementation of $Y$'s strategy, rather than in the learning rule. We model implementation errors as follows. Suppose $Y$ intends to use a memory-one strategy $\mathbf{q}=\left(q_{CC},q_{CD},q_{DC},q_{DD}\right)$. However, $Y$ has a ``trembling hand,'' so with probability $\varepsilon >0$, $Y$ cooperates with probability $1-q_{xy}$ instead of $q_{xy}$. In this scenario, $Y$'s intended strategy remains $\textbf{q}$, but its effective strategy is $\left(1-\varepsilon\right)\mathbf{q}+\varepsilon\left(\mathbf{1}-\mathbf{q}\right)$ \citep{hilbe:JTB:2015}.

In order to evaluate the effects of implementation errors on the optimization process, we assume that only the optimizing player ($Y$) is susceptible to a trembling hand. Both players (fixed, $X$, and optimizing, $Y$) could make errors, in principle, but this would then change the fitness landscape that $Y$ explores. During optimization, $Y$ varies the parameters of its intended strategy $\mathbf{q}$ using gradient ascent, but the payoff $\pi_{Y}\left(\mathbf{p},\left(1-\varepsilon\right)\mathbf{q}+\varepsilon\left(\mathbf{1}-\mathbf{q}\right)\right)$ at each step is calculated using $X$'s intended strategy, $\mathbf{p}$, and $Y$'s effective strategy, $\left(1-\varepsilon\right)\mathbf{q}+\varepsilon\left(\mathbf{1}-\mathbf{q}\right)$.

Note that for any $\varepsilon \in \left(0,1\right)$, $Y$'s effective strategy is fully mixed, even if its intended strategy $\mathbf{q}$ has deterministic components. Prima facie this provides a substantial improvement in robustness: local optima of the form ``always cooperate'' or ``always defect,'' where the deterministic components of $\mathbf{q}$ make the payoff $\pi_{Y}\left(\mathbf{p},\mathbf{q}\right)$ independent of the remaining components, are no longer stable points of PGA with the modified payoff function $\pi_{Y}\left(\mathbf{p},\left(1-\varepsilon\right)\mathbf{q}+\varepsilon\left(\mathbf{1}-\mathbf{q}\right)\right)$. Our numerical simulations confirm this improvement but also reveal that it might only be achievable over sufficiently long time horizons, with PGA trajectories spending considerable time in the vicinity of what would be a suboptimal endpoint.

We run simulations in each of the three types of IPD games, using game parameters $\left(2,-1,7,0\right)$ for $S+T>2R$, $\left(3,0,5,1\right)$ for $2P < S+T < 2R$, and $\left(4,0,5,3\right)$ for $S+T<2P$. In each game, we consider $1{,}000$ arcsine-sampled memory-one strategies $\mathbf{p}$ of $X$. For each of these, we run PGA starting from $500$ arcsine-sampled intended strategies $\mathbf{q}$ of $Y$. We find that in all cases, the endpoint of $Y$'s trajectory globally maximizes its payoff (rounded to two decimal points). However, for trajectories tending to suboptimal points when there are no errors, the addition of errors does not always change the learning process over short time horizons. In \fig{trajectories_with_errors}, we have an example of a trajectory that leads to a small neighborhood of a suboptimal point, remains there for an extended period of time (making small gradient steps), and eventually escapes toward a much better outcome for both players.

\begin{figure}
	\centering
	\includegraphics[width=1.0\linewidth]{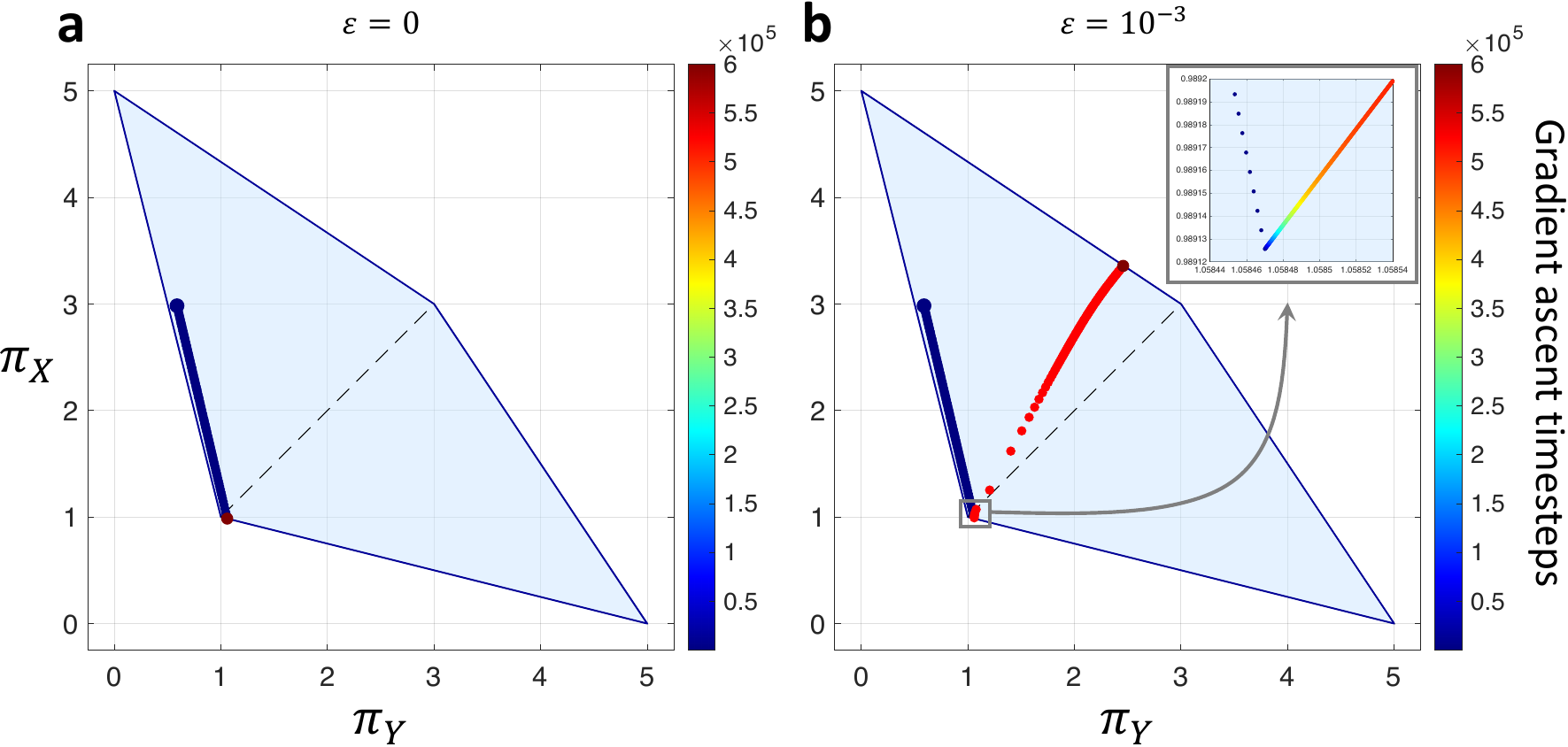}
	\caption{\textbf{Effects of implementation errors on payoff trajectories.} For the (non-ZD) strategy of \fig{distribution_general}\textbf{a}, we know that a significant number of runs lead to a suboptimal outcome. Panel \textbf{a} in the present figure gives an example of such a trajectory. When the optimizing player suffers from implementation errors, he or she plays $\left(1-\varepsilon\right)\mathbf{q}+\varepsilon\left(1-\mathbf{q}\right)$ in place of $\mathbf{q}$, where $\varepsilon$ is the error rate. Introducing a small error rate, $\varepsilon =10^{-3}$ in \textbf{b}, can lead the selfish learner toward the optimal outcome. What is notable here is that it does not necessarily do so efficiently. As \textbf{b} illustrates, the trajectory can still initially lead a learner toward a suboptimal outcome, much in the same way that would be seen without implementation errors. However, in this case the suboptimal outcome can be escaped, although it does take a significant amount of time to do so. Even when errors lead to better outcomes in the long run, selfish optimization can still be quite inefficient.\label{fig:trajectories_with_errors}}
\end{figure}

\section{Discussion}
Understanding the dynamics of strategic interactions between multiple agents is of both theoretical and practical interest. This paper focuses on the two-agent case, in situations modeled by iterated games. Recent literature proves the existence of ``payoff control'' strategies in these situations, which allow a player to unilaterally constrain both players' long-run expected payoffs. The class of zero-determinant (ZD) strategies \citep{pressdyson} enforces a linear relationship between the payoffs, while the strategies described by \citet{payoffcontrol} impose looser constraints, involving one or more linear inequalities among the payoffs. Geometrically, the former strategies restrict the feasible payoff region $\mathcal{C}\left(\mathbf{p}\right)$ to a line, while the latter allow more general control over how the edges of $\mathcal{C}\left(\mathbf{p}\right)$ are oriented. Our study investigates the reliability of these payoff control strategies in practice, when a fixed player $X$, who uses them, faces a selfish co-player $Y$ concerned only with maximizing its own payoff.

\citet{chenzinger} show that in any iterated prisoner's dilemma satisfying $2P<S+T<2R$, all positively correlated ZD (pcZD) strategies are robust even when $Y$ is selfish: all adapting paths of $Y$ lead to the maximum $\pi_{Y}$ without getting stuck at local optima, thereby maximizing $\pi_{X}$ and achieving $X$'s desired outcome. They conclude that ``it is always `safe' for $X$ to use pcZD strategies, and she will receive her desired score in a very robust way, without knowing which adapting path $Y$ will follow.'' There might be a tendency to think that payoff control would be similarly reliable in more general situations as well. For instance, \citet{payoffcontrol} conclude that their (linear-inequality-enforcing) payoff control strategies allow $X$ to ``control the evolutionary route of the game, as long as the opponent is rational and self-optimizing'' and thereby ``enforce the game to finally converge either to a mutual-cooperation equilibrium or to any feasible equilibrium that she wishes.'' Our results demonstrate that controlling the shape of $\mathcal{C}\left(\mathbf{p}\right)$ is not sufficient to control the adapting path of a selfish player, whose objective is to optimize a non-convex function.

We prove analytically that when $Y$ follows a local hill-climbing procedure like projected gradient ascent (PGA), it always reaches the boundary of the hypercube; however, our numerical simulations show that the endpoint is often only locally optimal, with the true optimum lying elsewhere on the boundary. We find this to be the case even when $X$ uses a pcZD strategy in generalized IPDs that do not satisfy Chen and Zinger's constraints. When $X$ uses a general memory-one strategy (the class to which the control strategies described by \citet{payoffcontrol} belong), suboptimal endpoints appear even in the usual IPD satisfying Chen and Zinger's conditions. Furthermore, a significant number of initial strategies can lead to these suboptimal endpoints. The resulting payoffs to both players can, in some cases, be substantially lower than the maximal payoffs. These results show that the efficacy of $X$'s payoff control strategy in achieving its desired outcome is not entirely within $X$'s hands, but partly contingent on $Y$'s initial strategy and learning rule, even when $Y$ is selfishly rational.

While our study focuses on player $X$ using a fixed strategy in a specified game, the implications extend to more general settings. In realistic scenarios, the players may not know which game best describes their interactions, or even whether the game is fixed over the course of their interactions. Their environment could be dynamic and modeled by a stochastic game in which the payoff parameters $\left(R,S,T,P\right)$ change in response to the players' actions \citep{shapley:PNAS:1953,hilbe:Nature:2018}. Although we have not explicitly studied stochastic games here, it seems unlikely that a robustness result would hold in that setting either, at least beyond games involving quite restrictive assumptions. We also note that, although Chen and Zinger's result on the robustness of pcZD strategies in certain IPDs does not extend to general iterated games (even those with static payoff parameters) and strategies, there could indeed be some larger class of game-strategy combinations exhibiting robustness to selfish optimization (in the sense of long-run outcomes). We were not able to readily identify such a pattern. However, due to the ease in finding examples for which selfish optimization leads to multiple endpoints, we would hypothesize that any such class must be quite limited.

One approach that does appear to be beneficial for learning via gradient ascent is the presence of implementation errors. We leave open the question of whether a robustness result for long-run outcomes can be proved in this case for general memory-one strategies. With that said, our examples indicate these benefits might be realized only when the time horizon is sufficiently long. Thus, there are at least two kinds of issues: does gradient ascent lead to optimal outcomes, and, if so, does it do so efficiently? Our findings indicate that, generally speaking, the answer to at least one of these questions must be in the negative. As a result, payoff control and selfish optimization can fail to be effective even in settings of single-agent learning.

Finally, in multi-agent learning, both $X$ and $Y$ might be implementing an optimization process simultaneously. Assuming that $Y$ is selfishly rational, $X$ can try to devise a learning rule with a specific objective in mind (e.g. incentivize $Y$ to repeatedly cooperate). One way of doing so could be to devise a rule that transforms the rightmost point of the feasible region, corresponding to the red dot in \fig{feasible_regions}, in the desired way. Such an approach to multi-agent learning is appealing because it need not take into account what $Y$ is currently doing; rather, what matters is where this rightmost point lies as a function of $X$'s strategy parameters, which essentially allows $X$ to behave as if no opponent is present. But the success of such an approach clearly relies on whether the adapting paths of $Y$ can actually lead to this point, as well as on the amount of time it takes to do so. So, while one might be able to devise such an algorithm, it is not at all clear, based on our results, whether it would be successful in practice against a selfish, myopic opponent. Future research addressing this issue with more sophisticated single-agent techniques would therefore benefit multi-agent methods that rely on the efficacy of selfish optimization.

\setcounter{equation}{0}
\setcounter{figure}{0}
\setcounter{section}{0}
\setcounter{table}{0}
\renewcommand{\thesection}{Appendix~\Alph{section}}
\renewcommand{\thesubsection}{\Alph{section}.\arabic{subsection}}
\renewcommand{\theequation}{\Alph{section}.\arabic{equation}}

\section{Outline and analysis of robustness proof of \citep{chenzinger}}
The robustness result of \citet{chenzinger}, which is summarized in \sect{robustness}, applies to iterated prisoner's dilemmas satisfying $2P<S+T<2R$. In this section, we provide a brief outline of their proof, and highlight why it does not generalize when $S+T>2R$ or $S+T<2P$.

\citet{chenzinger} use game parameters rescaled according to $Z \rightarrow \left(Z-P\right) /\left(R-P\right)$, so that it may be assumed that $R=1$ and $P=0$ without a loss of generality. With these rescaled parameters, all prisoner's dilemmas satisfy $T>1$ and $S<0$, and $\kappa\in\left[0,1\right]$ (where $\kappa$ is the parameter in the ZD-imposed relation $\pi_{X}-\kappa =\chi\left(\pi_{Y}-\kappa\right)$). Following the terminology of \citet{chenzinger}, we use the term ``positively correlated zero-determinant (pcZD)'' strategies to refer to ZD strategies with positive $\chi$. The fixed player, $X$, uses a pcZD strategy $\mathbf{p}$, while the (selfish) adapting player, $Y$, uses $\mathbf{q}$. The key result is Proposition 2 of \citet{chenzinger}, which states that when $X$ uses a pcZD strategy, $\frac{\partial\pi_{Y}\left(\mathbf{p},\mathbf{q}\right)}{\partial q_{xy}}\geqslant 0$ for each $x,y\in\left\{C,D\right\}$, at all strategies $\mathbf{q}=\left(q_{CC},q_{CD},q_{DC},q_{DD}\right)$ where the Markov limiting distribution is well-defined. After considering the edge cases where any of these derivatives vanish (an essential step for the argument to apply to all adapting paths), this result can be used to show that $Y$'s optimized strategy along any adapting path is repeated cooperation.

To prove the above condition on $\pi_{Y}$, it suffices to prove the same condition on $\pi_{X}$, since their derivatives are related by the positive proportionality factor $\chi$ when $\mathbf{p}$ is a pcZD strategy. \citet{chenzinger} show that, for $x,y\in\left\{C,D\right\}$, the derivative of $\pi_{X}$ with respect to $q_{xy}$ satisfies
\begin{align}
D\left(\mathbf{p},\mathbf{q}\right)^{2} \frac{\partial\pi_{X}\left(\mathbf{p},\mathbf{q}\right)}{\partial q_{xy}} = \left( 1-p_{CD}-\left(1-p_{CC}\right) S+p_{DD}\left(1-S\right)\right) A_{xy} B_{xy} , \label{eq:partial_piX}
\end{align}
where \emph{(i)} $D\left(\mathbf{p},\mathbf{q}\right)$ is a quantity that is nonzero whenever the Markov chain defined by \eq{markov_matrix} has a unique stationary distribution and \emph{(ii)} the terms $A_{xy}$ and $B_{xy}$ have no dependence on $q_{xy}$ and are linear in $q_{x'y'}$ for $\left(x',y'\right)\neq\left(x,y\right)$. (We slightly modify their notation for clarity.) By \emph{(i)}, the sign of each derivative term is determined by the three factors on the right-hand side of \eq{partial_piX} whenever \eq{markov_matrix} has a unique stationary distribution. By \emph{(ii)}, the sign of $A_{xy}B_{xy}$ is fully determined by its signs at the vertices of the cube $\left[0,1\right]^{3}$. Using the fact that $p_{CC},p_{CD},p_{DC},p_{DD}\in\left[0,1\right]$, one sees by inspection that $-A_{CC}$, $A_{CD}$, $-A_{DC}$, and $A_{DD}$ are always non-negative at every vertex. Also, by inspection, the terms $-B_{CC}$, $B_{CD}$, $-B_{DC}$, and $B_{DD}$ are non-negative at every vertex if the three conditions
\begin{subequations}\label{eq:B_conditions}
\begin{align}
p_{CC} \geqslant p_{CD} ; \\
p_{DC} \geqslant p_{DD} ; \\
2 \geqslant S+T \geqslant 0
\end{align}
\end{subequations}
are satisfied. These three conditions, together with a fourth condition,
\begin{align}
1-p_{CD}-\left(1-p_{CC}\right) S+p_{DD}\left(1-S\right) \geqslant 0 , \label{eq:AB_prefactor}
\end{align}
would imply that the right-hand side of \eq{partial_piX} is non-negative, thereby giving $\frac{\partial\pi_{Y}\left(\mathbf{p},\mathbf{q}\right)}{\partial q_{xy}}\geqslant 0$.

We now examine these four conditions in iterated prisoner's dilemmas, assuming throughout that both $\chi$ and $\phi$ are positive since we are concerned with pcZD strategies. Using \eq{ZD_formula} for a generic zero-determinant strategy, one can show that \eq{B_conditions}\textbf{a} and \eq{B_conditions}\textbf{b} are equivalent to 
\begin{subequations}\label{eq:B_conditions_ZD}
\begin{align}
\chi\left(T-1\right) +\left(1-S\right) &\geqslant 0 ; \\
T-\chi S &\geqslant 0 ,
\end{align}
\end{subequations}
respectively, for any pcZD strategy $\mathbf{p}$.

All prisoner's dilemma games satisfy \eq{AB_prefactor} and \eq{B_conditions_ZD} with strict inequalities, due to the ranking $T>R=1>P=0>S$. The remaining condition is simply the rescaled version of $2P \leqslant S+T \leqslant 2R$. Therefore, IPDs that satisfy Chen and Zinger's constraint $2P<S+T<2R$ meet all four conditions required for $\frac{\partial\pi_{Y}\left(\mathbf{p},\mathbf{q}\right)}{\partial q_{xy}}\geqslant 0$ to hold. As detailed by \citet{chenzinger}, this implies (after consideration of vanishing derivatives) that the endpoint of any adapting path must satisfy $q_{CC}=q_{CD}=1$ or $p_{CC}=q_{CC}=1$, both of which correspond to the payoff-maximizing behavior of repeated cooperation by $Y$ (in the latter case, the Markov limiting distribution is repeated cooperation by both players). However, IPDs with $S+T>2R$ or $S+T<2P$ violate \eq{B_conditions}\textbf{c}, allowing for the partial derivatives $\frac{\partial\pi_{Y}\left(\mathbf{p},\mathbf{q}\right)}{\partial q_{xy}}$ to be negative. For instance, with game parameters $(2,-1,7,0)$ as in \fig{GA_trajectories}, pcZD strategy $\mathbf{p} = (1,0.12,0.88,0)$, and $\mathbf{q} = (0.08,0.77,0.95,0.68)$, the partial derivatives $\frac{\partial\pi_{Y}\left(\mathbf{p},\mathbf{q}\right)}{\partial q_{CC}}$ and $\frac{\partial\pi_{Y}\left(\mathbf{p},\mathbf{q}\right)}{\partial q_{CD}}$ are negative. This creates the possibility that strategies other than repeated cooperation can be local optima and thus stable endpoints of adapting paths. Our simulations, discussed in \sect{methods_and_results}, show that this possibility can occur in practice.

\section{Equalizers and stable points of gradient ascent}
In a repeated game with game parameters $R$, $S$, $T$, and $P$, \citet{pressdyson} established an explicit formula for payoffs when $X$ uses $\mathbf{p}$ and $Y$ uses $\mathbf{q}$. Specifically, $Y$'s payoff is given by
\begin{align}
\pi_{Y}\left(\mathbf{p},\mathbf{q}\right) &= \frac{\det\begin{pmatrix}
		p_{CC}q_{CC}-1 & p_{CC}-1 & q_{CC}-1 & R \\
		p_{DC}q_{CD} & p_{DC} & q_{CD}-1 & S \\
		p_{CD}q_{DC} & p_{CD}-1 & q_{DC} & T \\
		p_{DD}q_{DD} & p_{DD} & q_{DD} & P \\
\end{pmatrix}}{\det\begin{pmatrix}
p_{CC}q_{CC}-1 & p_{CC}-1 & q_{CC}-1 & 1 \\
p_{DC}q_{CD} & p_{DC} & q_{CD}-1 & 1 \\
p_{CD}q_{DC} & p_{CD}-1 & q_{DC} & 1 \\
p_{DD}q_{DD} & p_{DD} & q_{DD} & 1 \\
\end{pmatrix}} .
\end{align}
By linearity of the determinant, there are multilinear functions $f_{CC}$, $f_{DC}$, $f_{CD}$, and $f_{DD}$ such that
\begin{align}
\pi_{Y}\left(\mathbf{p},\mathbf{q}\right) &= \frac{f_{CC}\left(\mathbf{p},\mathbf{q}\right) R+f_{DC}\left(\mathbf{p},\mathbf{q}\right) S+f_{CD}\left(\mathbf{p},\mathbf{q}\right) T+f_{DD}\left(\mathbf{p},\mathbf{q}\right) P}{f_{CC}\left(\mathbf{p},\mathbf{q}\right) +f_{DC}\left(\mathbf{p},\mathbf{q}\right) +f_{CD}\left(\mathbf{p},\mathbf{q}\right) +f_{DD}\left(\mathbf{p},\mathbf{q}\right)} . \label{eq:piY_f}
\end{align}
In what follows, we let $f_{\Sigma}\coloneqq f_{CC}+f_{DC}+f_{CD}+f_{DD}$.

\begin{equalizertheorem}
If $\mathbf{p}\in\left[0,1\right]^{4}\setminus\left\{\left(1,1,0,0\right)\right\}$, then $V\left(\mathbf{p},\mathbf{q}\right) =E\left(\mathbf{p}\right)$ whenever $\mathbf{q}\in\left(0,1\right)^{4}$.
\end{equalizertheorem}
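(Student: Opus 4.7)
The inclusion $E(\mathbf{p}) \subseteq V(\mathbf{p},\mathbf{q})$ is immediate, since an equalizer strategy makes $\pi_{Y}$ constant in $\mathbf{q}'$ and therefore kills its gradient everywhere. For the converse, I would exploit that $\pi_{Y}(\mathbf{p},\mathbf{q}) = \nu(\mathbf{p},\mathbf{q})\cdot(R,T,S,P)^{\T}$ is \emph{linear} in the payoff vector, with $\nu$ the stationary distribution of the transition matrix $M(\mathbf{p},\mathbf{q})$. Writing $f_{xy}(\mathbf{q}) := \partial_{q_{xy}}\nu(\mathbf{p},\mathbf{q})$ as a row vector and $F(\mathbf{q}) := \mathrm{Span}\{f_{CC}(\mathbf{q}),f_{CD}(\mathbf{q}),f_{DC}(\mathbf{q}),f_{DD}(\mathbf{q})\}$, the orthogonal complements $V(\mathbf{p},\mathbf{q})^{\perp}$ and $E(\mathbf{p})^{\perp}$ are the images of $F(\mathbf{q})$ and of $\sum_{\mathbf{q}'\in(0,1)^{4}}F(\mathbf{q}')$ under the fixed linear bijection permuting the second and third coordinates (to match the ordering $(R,T,S,P)$ used by $\pi_{Y}$). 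The theorem thus reduces to showing that $F(\mathbf{q})$ is independent of $\mathbf{q}\in(0,1)^{4}$.

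Differentiating the stationary equation $\nu(I - M) = 0$ gives $f_{xy}(I - M) = \nu L_{xy}$, where $L_{xy} := \partial M/\partial q_{xy}$. The crucial structural observation is that each $q_{xy}$ appears in exactly one row of $M$ (say row $r_{xy}$) and only linearly, so $L_{xy}$ is rank one: it vanishes outside row $r_{xy}$, where it equals a fixed row vector $w_{xy}$ that depends only on $\mathbf{p}$ and satisfies $w_{xy}\cdot\mathbf{1} = 0$ by stochasticity of $M$. Consequently $\nu L_{xy} = \nu_{r_{xy}}w_{xy}$ is a scalar multiple of $w_{xy}$. When $\mathbf{p}\neq(1,1,0,0)$ and $\mathbf{q}\in(0,1)^{4}$, the Markov chain has a unique full-support stationary distribution and $(I - M)$ restricts to a bijection on the row-vector space $\mathbf{1}^{\perp}$, so each $f_{xy}(\mathbf{q})$ exists uniquely in $\mathbf{1}^{\perp}$ and
\[
F(\mathbf{q}) = \bigl\{y\in\mathbf{1}^{\perp} : y(I - M(\mathbf{q}))\in W\bigr\}, \qquad W := \mathrm{Span}\{w_{CC},w_{CD},w_{DC},w_{DD}\},
\]
where $W$ is manifestly $\mathbf{q}$-independent.

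To verify that this preimage is itself $\mathbf{q}$-independent, suppose $y\in\mathbf{1}^{\perp}$ satisfies $y(I - M(\mathbf{q}_{1}))\in W$. The expansion $M(\mathbf{q}_{1}) - M(\mathbf{q}_{2}) = \sum_{xy}(q_{1}^{xy} - q_{2}^{xy})L_{xy}$ together with the rank-one identity $yL_{xy} = y_{r_{xy}}w_{xy}\in W$ (valid for every row vector $y$ and every $xy$) gives
\[
y(I - M(\mathbf{q}_{2})) = y(I - M(\mathbf{q}_{1})) + \sum_{xy}(q_{1}^{xy} - q_{2}^{xy})\,y_{r_{xy}}\,w_{xy} \in W.
\]
By symmetry $F(\mathbf{q}_{1}) = F(\mathbf{q}_{2})$, so $V(\mathbf{p},\mathbf{q})$ is the same subspace of $\mathbb{R}^{4}$ for every interior $\mathbf{q}$; it must therefore coincide with $E(\mathbf{p}) = \bigcap_{\mathbf{q}'\in(0,1)^{4}}V(\mathbf{p},\mathbf{q}')$.

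The main obstacle, as I see it, is spotting that $yL_{xy}$ lies in $W$ regardless of $y$ — this is the ``miracle'' that makes the telescoping argument close, and it is rooted in the rank-one structure of $L_{xy}$ forced by each $q_{xy}$ appearing in only a single row of the Markov matrix. Once that observation is in hand the invariance step above is essentially the entire content of the proof; the hypothesis $\mathbf{p}\neq(1,1,0,0)$ enters only to guarantee uniqueness of $\nu$ and the invertibility of $(I - M)$ on $\mathbf{1}^{\perp}$ so that the $f_{xy}(\mathbf{q})$ are well defined.
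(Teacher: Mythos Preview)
Your approach is genuinely different from the paper's: the paper writes $V(\mathbf{p},\mathbf{q})$ and an auxiliary space $B_{\varepsilon}(\mathbf{p})$ as kernels of explicit matrices, then matches their dimensions by checking (symbolically, with computer assistance) which $2\times 2$ and $3\times 3$ minors vanish. Your rank-one observation and the telescoping identity $y(I-M(\mathbf{q}_{2}))=y(I-M(\mathbf{q}_{1}))+\sum_{xy}(q_{1}^{xy}-q_{2}^{xy})y_{r_{xy}}w_{xy}$ would replace all of that computation by a structural argument, which is attractive.

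However, there is a real gap. The assertion that $\nu$ has full support whenever $\mathbf{p}\neq(1,1,0,0)$ and $\mathbf{q}\in(0,1)^{4}$ is false: take $p_{CC}=p_{CD}=1$ with $(p_{DC},p_{DD})\neq(0,0)$. Then $\{CC,CD\}$ is a closed class, the states $DC,DD$ are transient, and $\nu_{DC}=\nu_{DD}=0$. Since $r_{CD}=DC$ and $r_{DD}=DD$, you get $f_{CD}=f_{DD}=0$, so $F(\mathbf{q})=(I-M)^{-1}\bigl(\mathrm{Span}\{w_{CC},w_{DC}\}\bigr)$, which is \emph{strictly} contained in your $G(\mathbf{q})=(I-M)^{-1}(W)$ as soon as $p_{DC}\neq 1$ or $p_{DD}\neq 1$. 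Your telescoping step proves $G$ is $\mathbf{q}$-independent, but the theorem requires $F$ to be, and $F=G$ is precisely what fails here. The fix stays within your framework: the support $S$ of $\nu$ is determined by the zero pattern of $M$ and hence by $\mathbf{p}$ alone, so one should work with $W_{S}:=\mathrm{Span}\{w_{xy}:r_{xy}\in S\}$; the extra ingredient needed is that $y(I-M)\in W_{S}$ forces $y_{r}=0$ for transient $r$ (from the block-triangular form of $M$ and invertibility of $I-M_{TT}$), after which the telescoping closes within $W_{S}$. A smaller point: the final identification $E(\mathbf{p})=\bigcap_{\mathbf{q}'\in(0,1)^{4}}V(\mathbf{p},\mathbf{q}')$ deserves a one-line justification (constancy of the rational function $\pi_{Y}$ on the open cube forces constancy wherever it is defined), since $E(\mathbf{p})$ is defined via $[0,1]^{4}$.
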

\begin{proof}
Let ``$\ker$'' denote the right null space of a matrix. Using \eq{piY_f}, together with the fact that $\partial f_{CC}/\partial q_{CC}=\partial f_{DC}/\partial q_{CD}=\partial f_{CD}/\partial q_{DC}=\partial f_{DD}/\partial q_{DD}=0$, we can express \eq{V_def} as
\begin{align}
V\left(\mathbf{p},\mathbf{q}\right) &= \ker\begin{pmatrix}
-\frac{f_{CC}}{f_{\Sigma}}\frac{\partial f_{\Sigma}}{\partial q_{CC}} & \frac{\partial f_{DC}}{\partial q_{CC}}-\frac{f_{DC}}{f_{\Sigma}}\frac{\partial f_{\Sigma}}{\partial q_{CC}} & \frac{\partial f_{CD}}{\partial q_{CC}}-\frac{f_{CD}}{f_{\Sigma}}\frac{\partial f_{\Sigma}}{\partial q_{CC}} & \frac{\partial f_{DD}}{\partial q_{CC}}-\frac{f_{DD}}{f_{\Sigma}}\frac{\partial f_{\Sigma}}{\partial q_{CC}} \\
\frac{\partial f_{CC}}{\partial q_{CD}}-\frac{f_{CC}}{f_{\Sigma}}\frac{\partial f_{\Sigma}}{\partial q_{CD}} & -\frac{f_{DC}}{f_{\Sigma}}\frac{\partial f_{\Sigma}}{\partial q_{CD}} & \frac{\partial f_{CD}}{\partial q_{CD}}-\frac{f_{CD}}{f_{\Sigma}}\frac{\partial f_{\Sigma}}{\partial q_{CD}} & \frac{\partial f_{DD}}{\partial q_{CD}}-\frac{f_{DD}}{f_{\Sigma}}\frac{\partial f_{\Sigma}}{\partial q_{CD}} \\
\frac{\partial f_{CC}}{\partial q_{DC}}-\frac{f_{CC}}{f_{\Sigma}}\frac{\partial f_{\Sigma}}{\partial q_{DC}} & \frac{\partial f_{DC}}{\partial q_{DC}}-\frac{f_{DC}}{f_{\Sigma}}\frac{\partial f_{\Sigma}}{\partial q_{DC}} & -\frac{f_{CD}}{f_{\Sigma}}\frac{\partial f_{\Sigma}}{\partial q_{DC}} & \frac{\partial f_{DD}}{\partial q_{DC}}-\frac{f_{DD}}{f_{\Sigma}}\frac{\partial f_{\Sigma}}{\partial q_{DC}} \\
\frac{\partial f_{CC}}{\partial q_{DD}}-\frac{f_{CC}}{f_{\Sigma}}\frac{\partial f_{\Sigma}}{\partial q_{DD}} & \frac{\partial f_{DC}}{\partial q_{DD}}-\frac{f_{DC}}{f_{\Sigma}}\frac{\partial f_{\Sigma}}{\partial q_{DD}} & \frac{\partial f_{CD}}{\partial q_{DD}}-\frac{f_{CD}}{f_{\Sigma}}\frac{\partial f_{\Sigma}}{\partial q_{DD}} & -\frac{f_{DD}}{f_{\Sigma}}\frac{\partial f_{\Sigma}}{\partial q_{DD}}
\end{pmatrix} .
\end{align}
Let $M^{V}$ denote the matrix in this equation. A straightforward calculation shows that the determinants of all $3\times 3$ minors of $M^{V}$ vanish, which means that $\dim V\left(\mathbf{p},\mathbf{q}\right)\geqslant 2$. Furthermore, since $\mathbf{p}\neq\left(1,1,0,0\right)$ and $\mathbf{q}\in\left(0,1\right)^{4}$, one can check that all determinants of $2\times 2$ minors of $M^{V}$ vanish if and only if one of the following holds: \emph{(i)} $p_{CC}=p_{CD}=1$, \emph{(ii)} $p_{DC}=p_{DD}=0$, or \emph{(iii)} $p_{CC}=p_{CD}=p_{DC}=p_{DD}$. The only strategy that satisfies one of these conditions \emph{and} causes $M^{V}$ to vanish completely is $\mathbf{p}=\left(1,1,0,0\right)$, which is excluded. Therefore, $\dim V\left(\mathbf{p},\mathbf{q}\right)$ is generically $2$, deviating (to $3$) only if $p_{CC}=p_{CD}=1$, $p_{DC}=p_{DD}=0$, or $p_{CC}=p_{CD}=p_{DC}=p_{DD}$.

Clearly, $E\left(\mathbf{p}\right)\subseteq V\left(\mathbf{p},\mathbf{q}\right)$ by the definition of an equalizer strategy. However, we need a more tangible characterization of $E\left(\mathbf{p}\right)$ in order to argue that $E\left(\mathbf{p}\right) =V\left(\mathbf{p},\mathbf{q}\right)$. The idea is to consider the boundary strategies with purely deterministic components, but we also include a buffer parameter, $\varepsilon\in\left(0,1/2\right)$, to ensure that payoffs are always defined. For $\varepsilon\in\left(0,1/2\right)$, let
\begin{align}
B_{\varepsilon}\left(\mathbf{p}\right) &\coloneqq \left\{\begin{pmatrix}R \\ S \\ T \\ P\end{pmatrix}\in\mathbb{R}^{4}\ :\ \pi_{Y}\left(\mathbf{p},\mathbf{q'}\right)\textrm{ is independent of }\mathbf{q'}\in\left\{\varepsilon ,1-\varepsilon\right\}^{4}\right\} .
\end{align}
Again, $E\left(\mathbf{p}\right)\subseteq B_{\varepsilon}\left(\mathbf{p}\right)$ for every $\varepsilon\in\left(0,1/2\right)$ by the definition of an equalizer strategy. Conversely, if $\mathbf{p}$ is not an equalizer strategy, then we claim that there must exist $\varepsilon =\varepsilon\left(\mathbf{p}\right)$ with $0<\varepsilon\ll 1$ such that $\pi_{Y}\left(\mathbf{p},\mathbf{q'}\right)$ is \emph{not} independent of $\mathbf{q'}\in\left\{\varepsilon ,1-\varepsilon\right\}^{4}$. For non-equalizer $\mathbf{p}$, there must exist $\mathbf{q}^{1},\mathbf{q}^{2}\in\left[0,1\right]^{4}$ such that $\pi_{Y}\left(\mathbf{p},\mathbf{q}^{1}\right) >\pi_{Y}\left(\mathbf{p},\mathbf{q}^{2}\right)$. By continuity, we may assume that $\mathbf{q}^{1},\mathbf{q}^{2}\in\left(0,1\right)^{4}$. (Note that the values of $\mathbf{q}^{1}$ and $\mathbf{q}^{2}$ in $\left(0,1\right)^{4}$ may be considered functions of $\mathbf{p}$, which is how we will derive $\varepsilon$ as a function of $\mathbf{p}$.) Holding all other strategy components fixed, we know that for each $x,y\in\left\{C,D\right\}$, $\pi_{Y}$ is either independent of $q_{xy}$ or strictly monotonic in $q_{xy}$. (This can be seen by examining \eq{piY_f}; see also \citep{mcavoy}.) For $0<\varepsilon <\min_{x,y\in\left\{C,D\right\}}\left\{q_{xy}^{1},1-q_{xy}^{1},q_{xy}^{2},1-q_{xy}^{2}\right\}$, we define two new strategies $\left(\mathbf{q}^{1}\right) ',\left(\mathbf{q}^{2}\right) '\in\left\{\varepsilon ,1-\varepsilon\right\}^{4}$ as follows: Let $\left(\mathbf{q};q_{xy}=r\right)$ be the strategy obtained by taking $\mathbf{q}$ and replacing $q_{xy}$ by $r$. Start with $\left(\mathbf{q}^{1}\right) '=\mathbf{q}^{1}$, and for $\left(x,y\right)\in\left\{\left(C,C\right) ,\left(C,D\right) ,\left(D,C\right) ,\left(D,D\right)\right\}$ (in that order), we sequentially modify $\left(\mathbf{q}^{1}\right) '$ to give a non-decreasing sequence of payoffs to $Y$ by letting
\begin{align}
\left(\mathbf{q}^{1}\right)_{xy} &= 
\begin{cases}
1-\varepsilon & \pi_{Y}\left(\mathbf{p},\left(\left(\mathbf{q}^{1}\right) ';\left(q^{1}\right)_{xy}'=1-\varepsilon\right)\right)\geqslant\pi_{Y}\left(\mathbf{p},\left(\mathbf{q}^{1}\right) '\right) , \\
& \\
\varepsilon & \pi_{Y}\left(\mathbf{p},\left(\left(\mathbf{q}^{1}\right) ';\left(q^{1}\right)_{xy}'=1-\varepsilon\right)\right) <\pi_{Y}\left(\mathbf{p},\left(\mathbf{q}^{1}\right) '\right) .
\end{cases}
\end{align}
Similarly, for $\left(\mathbf{q}^{2}\right) '$, we start with $\left(\mathbf{q}^{2}\right) '=\mathbf{q}^{2}$ and make the successive changes
\begin{align}
\left(\mathbf{q}^{2}\right)_{xy} &= 
\begin{cases}
1-\varepsilon & \pi_{Y}\left(\mathbf{p},\left(\left(\mathbf{q}^{2}\right) ';\left(q^{2}\right)_{xy}'=1-\varepsilon\right)\right) <\pi_{Y}\left(\mathbf{p},\left(\mathbf{q}^{2}\right) '\right) , \\
& \\
\varepsilon & \pi_{Y}\left(\mathbf{p},\left(\left(\mathbf{q}^{2}\right) ';\left(q^{2}\right)_{xy}'=1-\varepsilon\right)\right)\geqslant\pi_{Y}\left(\mathbf{p},\left(\mathbf{q}^{2}\right) '\right) ,
\end{cases}
\end{align}
which gives a non-increasing sequence of payoffs to $Y$. After all of these updates, we have
\begin{align}
\pi_{Y}\left(\mathbf{p},\left(\mathbf{q}^{1}\right) '\right) \geqslant \pi_{Y}\left(\mathbf{p},\mathbf{q}^{1}\right) > \pi_{Y}\left(\mathbf{p},\mathbf{q}^{2}\right) \geqslant \pi_{Y}\left(\mathbf{p},\left(\mathbf{q}^{2}\right) '\right) .
\end{align}
Therefore, we see that $B_{\varepsilon}\left(\mathbf{p}\right)\subseteq E\left(\mathbf{p}\right)$, and thus that $B_{\varepsilon}\left(\mathbf{p}\right) = E\left(\mathbf{p}\right) \subseteq V\left(\mathbf{p},\mathbf{q}\right)$.

Just as $V$ is given by the null space of an explicit matrix ($M^{V}$), so too is $B_{\varepsilon}$. For $\mathbf{q}\in\left\{\varepsilon ,1-\varepsilon\right\}^{4}\setminus\left\{\left(\varepsilon ,\varepsilon ,\varepsilon ,\varepsilon\right)\right\}$, we let the row of $M^{B_{\varepsilon}}$ corresponding to $\mathbf{q}$ be the unique vector $v$ with
\begin{align}
\pi_{Y}\left(\mathbf{p},\mathbf{q}\right) - \pi_{Y}\left(\mathbf{p},\left(\varepsilon ,\varepsilon ,\varepsilon ,\varepsilon\right)\right) &= \left< v, \left(R,S,T,P\right) \right> .
\end{align}
$M^{B_{\varepsilon}}$ is thus a $15\times 4$ matrix, depending on $\mathbf{p}$, for which $B_{\varepsilon}\left(\mathbf{p}\right) =\ker M^{B_{\varepsilon}}$. A direct calculation shows that all determinants of $3\times 3$ minors of $M^{B_{\varepsilon}}$ vanish, regardless of $\varepsilon$. Thus, $\dim B_{\varepsilon}\left(\mathbf{p}\right)\geqslant 2$. Similarly, when $p_{CC}=p_{CD}=1$, $p_{DC}=p_{DD}=0$, or $p_{CC}=p_{CD}=p_{DC}=p_{DD}$, we see that all determinants of $2\times 2$ minors of $M^{B_{\varepsilon}}$ vanish (again, regardless of $\varepsilon$). By our remarks about $V\left(\mathbf{p},\mathbf{q}\right)$, the latter implies that $B_{\varepsilon}\left(\mathbf{p}\right)$ has dimension at least $3$ if and only if $V\left(\mathbf{p},\mathbf{q}\right)$ has dimension exactly $3$. Since $B_{\varepsilon}\left(\mathbf{p}\right) = E\left(\mathbf{p}\right) \subseteq V\left(\mathbf{p},\mathbf{q}\right)$, it follows that $E\left(\mathbf{p}\right) = V\left(\mathbf{p},\mathbf{q}\right)$.
\end{proof}

\begin{remark}
The proof of Theorem~\ref{thm:equalizer} is supplemented by the MATLAB file theorem1.m.
\end{remark}

\begin{corollary}
For any $\mathbf{p}\in\left[0,1\right]^{4}\setminus\left\{\left(1,1,0,0\right)\right\}$, the dimension of $E\left(\mathbf{p}\right)$ is $2$ unless $p_{CC}=p_{CD}=1$, $p_{DC}=p_{DD}=0$, or $p_{CC}=p_{CD}=p_{DC}=p_{DD}$, in which case $\dim E\left(\mathbf{p}\right) =3$.
\end{corollary}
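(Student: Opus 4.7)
The plan is to deduce the corollary directly from Theorem~\ref{thm:equalizer}, so that essentially no new calculation is required beyond a careful reading of that theorem's proof. By Theorem~\ref{thm:equalizer}, for any admissible $\mathbf{p}\in[0,1]^{4}\setminus\{(1,1,0,0)\}$ and any $\mathbf{q}\in(0,1)^{4}$, we have $E(\mathbf{p})=V(\mathbf{p},\mathbf{q})$. Therefore the corollary amounts to reading off $\dim V(\mathbf{p},\mathbf{q})$ from the rank analysis of the $4\times 4$ matrix $M^{V}$ that was carried out inside the proof of Theorem~\ref{thm:equalizer}.

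That analysis already supplies the three facts I need. First, every $3\times 3$ minor of $M^{V}$ vanishes identically, so $\mathrm{rank}\, M^{V}\leq 2$ and $\dim V(\mathbf{p},\mathbf{q})\geq 2$. Second, all $2\times 2$ minors of $M^{V}$ vanish \emph{if and only if} $\mathbf{p}$ belongs to one of the three special families $p_{CC}=p_{CD}=1$, $p_{DC}=p_{DD}=0$, or $p_{CC}=p_{CD}=p_{DC}=p_{DD}$; for any other admissible $\mathbf{p}$ there is a nonzero $2\times 2$ minor, hence $\mathrm{rank}\, M^{V}=2$ and $\dim V(\mathbf{p},\mathbf{q})=2$. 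Third, within each of the three special families $M^{V}$ does not vanish identically, because (as recorded in the theorem's proof) the only strategy that kills $M^{V}$ completely is the excluded $\mathbf{p}=(1,1,0,0)$; consequently $\mathrm{rank}\, M^{V}=1$ and $\dim V(\mathbf{p},\mathbf{q})=3$.

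Combining $E(\mathbf{p})=V(\mathbf{p},\mathbf{q})$ with this dimension count yields the corollary exactly as stated. The only bookkeeping point that merits care, and the closest thing to an obstacle, is the third item above: one must rule out a collapse all the way to $\mathrm{rank}\, M^{V}=0$ in each of the three degenerate families, which would force $\dim V(\mathbf{p},\mathbf{q})=4$. I would handle this by writing out $M^{V}$ in each family (keeping $\mathbf{q}\in(0,1)^{4}$) and exhibiting, by inspection, a single entry that is nonzero unless $\mathbf{p}=(1,1,0,0)$. No deeper argument is needed; the corollary is a compact restatement of the dimension tally already embedded in the proof of Theorem~\ref{thm:equalizer}, and can reasonably be flagged as proved by that computation together with the accompanying MATLAB verification in \texttt{theorem1.m}.
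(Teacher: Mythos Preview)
Your proposal is correct and mirrors the paper's own treatment: the corollary is stated without a separate proof because the dimension count for $V(\mathbf{p},\mathbf{q})$ (all $3\times 3$ minors vanish; all $2\times 2$ minors vanish iff one of the three special conditions holds; $M^{V}$ vanishes entirely only for the excluded $\mathbf{p}=(1,1,0,0)$) is already established inside the proof of Theorem~\ref{thm:equalizer}, and the identification $E(\mathbf{p})=V(\mathbf{p},\mathbf{q})$ then transfers it to $E(\mathbf{p})$. Your care in explicitly noting why $\mathrm{rank}\,M^{V}$ cannot drop to $0$ in the three degenerate families matches exactly what the paper records.
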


\begin{remark}
When discussing equalizer strategies, \citet{pressdyson} show that if there exist $\beta ,\gamma\in\mathbb{R}$ such that $\mathbf{p}=\left(1+\beta R+\gamma ,1+\beta T+\gamma ,\beta S+\gamma ,\beta P+\gamma\right)$, then by using $\mathbf{p}$, player $X$ can ensure $\beta\pi_{Y}+\gamma =0$, regardless of the strategy of $Y$. (Note that this is the same condition given by \citet{boerlijst:AMM:1997}, who discovered equalizer strategies well before the more general class of zero-determinant strategies was discovered by \citet{pressdyson}.) In particular, if $\beta\neq 0$, then $X$ can enforce the equalizer condition $\pi_{Y}=-\gamma /\beta$. Based on this condition, we define
\begin{align}
E_{0}\left(\mathbf{p}\right) &\coloneqq \left\{\begin{pmatrix}R \\ S \\ T \\ P\end{pmatrix}\in\mathbb{R}^{4}\ :\ \mathbf{p}=\begin{pmatrix}1+\beta R+\gamma \\ 1+\beta T+\gamma \\ \beta S+\gamma \\ \beta P+\gamma\end{pmatrix}\textrm{ for some }\beta\neq 0\textrm{ and }\gamma\in\mathbb{R}\right\} .
\end{align}
By the work of \citet{boerlijst:AMM:1997} and \citet{pressdyson}, we have $E_{0}\left(\mathbf{p}\right)\subseteq E\left(\mathbf{p}\right)$. It is natural to ask whether this inclusion can ever be strict. For a strategy of the form $\mathbf{p}=\left(p_{CC},1,0,0\right)$ with $p_{CC}\neq 1$, we have $\left(R,S,T,P\right)\in E\left(\mathbf{p}\right)$ if $S=P$. In this case, for $\left(R,S,T,S\right)\in E_{0}\left(\mathbf{p}\right)$, there must exist $\beta\neq 0$ and $\gamma\in\mathbb{R}$ such that $p_{CC}=1+\beta R+\gamma$ and $\beta S+\gamma =\beta T+\gamma =0$. The latter equation can hold only when $S=T$. Thus, when $S\neq T$, we have $E_{0}\left(\mathbf{p}\right)\subsetneq E\left(\mathbf{p}\right)$ for this $\mathbf{p}$.
\end{remark}

Finally, we turn to the proof of Proposition~\ref{prop:endpoints}, our result on the endpoints of gradient ascent:
\begin{propendpoints}
Suppose $X$ and $Y$ are general memory-one players in any iterated symmetric two-player two-action game, $\mathbf{p}$ is a fixed strategy of $X$, and $\mathbf{q}_{\text{final}}$ is $Y$'s optimized strategy following a projected gradient trajectory starting from any initial strategy, $\mathbf{q}_{0}$. Unless $\mathbf{p}$ is an equalizer strategy (the degenerate case corresponding to a flat payoff landscape), $\mathbf{q}_{\text{final}}$ is constrained as follows:
\begin{enumerate}

\item[(a)] For all $\mathbf{p}\in\left[0,1\right]^{4}\setminus\left\{\left(1,1,0,0\right)\right\}$, $\mathbf{q}_{\text{final}}$ has at least one deterministic component;

\item[(b)] If $\mathbf{p}$ is randomly sampled from $\left[0,1\right]^{4}$ (with each of the four components independent), then with probability one, $\mathbf{q}_{\text{final}}$ is one of the following: $\left(1,1,q_{DC},q_{DD}\right)$, $\left(q_{CC},q_{CD},0,0\right)$, or a strategy with all four components deterministic.

\end{enumerate}
\end{propendpoints}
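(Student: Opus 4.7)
\textbf{Part (a)} follows almost directly from Theorem~\ref{thm:equalizer}. My plan is to argue by contradiction: suppose $\mathbf{q}_{\text{final}} \in (0,1)^4$. Since the projection operator does not modify interior points, stationarity of $\mathbf{q}_{\text{final}}$ under PGA forces $\nabla_{\mathbf{y}} \pi_{Y}(\mathbf{p},\mathbf{y})|_{\mathbf{y}=\mathbf{q}_{\text{final}}} = 0$, i.e.\ $(R,S,T,P) \in V(\mathbf{p},\mathbf{q}_{\text{final}})$. Since $\mathbf{p} \neq (1,1,0,0)$, Theorem~\ref{thm:equalizer} gives $V(\mathbf{p},\mathbf{q}_{\text{final}}) = E(\mathbf{p})$, so $(R,S,T,P) \in E(\mathbf{p})$, meaning $\mathbf{p}$ is an equalizer strategy for the given game---contradicting the hypothesis. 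Hence $\mathbf{q}_{\text{final}}$ must have at least one component on the boundary of $[0,1]$.

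\textbf{Part (b)} I plan to prove by a case analysis on the set $A \subseteq \{CC,CD,DC,DD\}$ of indices at which $\mathbf{q}_{\text{final}}$ is non-deterministic. The cases $A = \emptyset$ (fully deterministic, already a listed form) and $A = \{CC,CD,DC,DD\}$ (ruled out by part (a) for non-equalizer $\mathbf{p}$) are immediate. For each remaining nonempty proper $A$ together with each choice of deterministic $\{0,1\}$-values on $A^c$, the KKT stationarity conditions for projected gradient ascent require $\partial_{q_{xy}} \pi_{Y}(\mathbf{p},\mathbf{q}_{\text{final}}) = 0$ for all $xy \in A$. The plan is to show that either (i) the chosen pattern causes $\pi_{Y}$ to be identically independent of the $A$-components---the ``structural'' cases, which correspond exactly to the two listed forms---or (ii) the vanishing-partial conditions impose nontrivial polynomial equations on $\mathbf{p}$ that cut out a measure-zero subset of $[0,1]^4$.

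The structural cases are easy to identify: if $q_{CC}=q_{CD}=1$, then $Y$ cooperates forever after its first cooperation, so for $\mathbf{p}$ outside a measure-zero set the stationary distribution makes $\pi_{Y}$ genuinely independent of $q_{DC},q_{DD}$, and the partials with respect to those free coordinates vanish identically; symmetrically, $q_{DC}=q_{DD}=0$ makes $\pi_{Y}$ independent of $q_{CC},q_{CD}$. In both situations, any values of the free coordinates yield a legitimate PGA fixed point (with the appropriate sign conditions on the clamped coordinates), producing exactly the two listed forms. For every other combination of $A$ and $\{0,1\}$-values on $A^c$, my plan is to reuse the matrix $M^{V}$ constructed in the proof of Theorem~\ref{thm:equalizer}: restricting to the $|A|$ rows indexed by $A$ and evaluating at a $\mathbf{q}$ with the specified deterministic values on $A^c$ produces a smaller linear system whose solvability in $(R,S,T,P)$ encodes the vanishing conditions. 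The main obstacle will be the combinatorial enumeration---several choices of $A$, each with $2^{|A^c|}$ sign patterns on $A^c$---of verifying that, outside the two structural exceptions, each resulting algebraic condition on $\mathbf{p}$ is nondegenerate and so defines a strictly lower-dimensional subvariety of $[0,1]^4$. I expect this final verification to be the longest and most tedious portion of the proof, naturally delegated to symbolic computation in the spirit of the companion \texttt{theorem1.m} file used for Theorem~\ref{thm:equalizer}.
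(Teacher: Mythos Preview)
Your proposal is essentially the paper's own argument. Part~(a) is exactly what the paper does (``an immediate corollary of Theorem~\ref{thm:equalizer}''), and for part~(b) the paper likewise performs a case analysis over the set $W$ of interior coordinates, identifies the same two structural exceptions where the relevant submatrix of $M^{V}$ vanishes identically, and delegates the remaining verifications to a symbolic companion file.

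One technical point you should anticipate when carrying out the enumeration: the paper treats $\left|W\right|\in\{2,3\}$ differently from $\left|W\right|=1$. For $\left|W\right|\geqslant 2$ the vanishing-partial conditions still involve the free components $q_{xy}$ with $xy\in W$, so framing them directly as ``polynomial conditions on $\mathbf{p}$'' requires care. The paper sidesteps this by showing that generically $\dim V_{W}(\mathbf{p},\mathbf{q})=2$ (the relevant $2\times 2$ minor conditions turn out to depend on $\mathbf{p}$ alone), whence $E(\mathbf{p})\subseteq V_{W}$ forces $V_{W}=E(\mathbf{p})$ and the non-equalizer hypothesis finishes the job. For $\left|W\right|=1$ this dimension-comparison fails since $\dim V_{W}\geqslant 3>\dim E(\mathbf{p})$; instead one uses the multilinearity of the Press--Dyson determinant formula to observe that $\partial\pi_{Y}/\partial q_{xy}$ is independent of $q_{xy}$ itself, so once the other three coordinates are pinned in $\{0,1\}$ the vanishing condition is a genuine (nontrivial) polynomial constraint on $\mathbf{p}$ alone. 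Your uniform phrasing ``each resulting algebraic condition on $\mathbf{p}$ is nondegenerate'' is correct in spirit but hides this split.
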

\begin{proof}
Part \emph{(a)} is an immediate corollary of \thm{equalizer}. To prove part \emph{(b)}, we assume that the components of $\mathbf{p}$ are sampled independently and simultaneously from an arcsine distribution. The argument is not specific to this distribution, however, and can be easily adapted to other kinds of sampling methods (e.g. uniform). First, for a subset $W\subseteq\left\{CC,CD,DC,DD\right\}$, we let
\begin{align}
V_{W}\left(\mathbf{p},\mathbf{q}\right) &\coloneqq \left\{\begin{pmatrix}R \\ S \\ T \\ P\end{pmatrix}\in\mathbb{R}^{4}\ :\ \frac{\partial\pi_{Y}\left(\mathbf{p},\mathbf{q}\right)}{\partial q_{xy}}=0 \textrm{ for every }xy\in W\right\} .
\end{align}
$V_{W}$ is a generalization of the space $V$ of \eq{V_def}, which satisfies $V\left(\mathbf{p},\mathbf{q}\right) =V_{\left\{CC,CD,DC,DD\right\}}\left(\mathbf{p},\mathbf{q}\right)$ and $V_{W}\left(\mathbf{p},\mathbf{q}\right)\subseteq V_{W'}\left(\mathbf{p},\mathbf{q}\right)$ whenever $W'\subseteq W$. By the argument we used in the proof of Theorem~\ref{thm:equalizer}, we know that $E\left(\mathbf{p}\right)$ has dimension $2$ with probability one since it deviates to $3$ only when $p_{CC}=p_{CD}=1$, $p_{DC}=p_{DD}=0$, or $p_{CC}=p_{CD}=p_{DC}=p_{DD}$ (each of which occurs with probability zero). Notably, in the proof of Theorem~\ref{thm:equalizer}, we saw that for $\mathbf{p}\in\left[0,1\right]^{4}\setminus\left\{\left(1,1,0,0\right)\right\}$ and $\mathbf{q}\in\left(0,1\right)^{4}$, the dimension of $V\left(\mathbf{p},\mathbf{q}\right)$ is determined by $\mathbf{p}$ alone. This property does not extend to all $\mathbf{q}$ on the boundary, but it does hold in many cases, so in the following we use the reasoning of the proof of Theorem~\ref{thm:equalizer} and handle the exceptional cases as they arise.

The case of $W\subseteq\left\{CC,CD,DC,DD\right\}$ with $\left| W\right| =3$ is straightforward. Suppose that $q_{xy}\in\left(0,1\right)$ when $xy\in W$ and $q_{xy}\in\left\{0,1\right\}$ when $xy\not\in W$. We know that $\dim V_{W}\left(\mathbf{p},\mathbf{q}\right)\geqslant 2$, and one can reduce the condition $\dim V_{W}\left(\mathbf{p},\mathbf{q}\right) >2$ to a system of equations in $\mathbf{p}$. Importantly, these equations are not all identically zero, so the probability that they are all satisfied when $\mathbf{p}$ is chosen randomly is zero. As a result, we have $E\left(\mathbf{p}\right) =V_{W}\left(\mathbf{p},\mathbf{q}\right)$ like in the proof of Theorem~\ref{thm:equalizer}.

The case of $\left| W\right| =2$, with $q_{xy}\in\left(0,1\right)$ when $xy\in W$ and $q_{xy}\in\left\{0,1\right\}$ when $xy\not\in W$, is similar. However, there are two exceptional cases: $W=\left\{DC,DD\right\}$ and $q_{CC}=q_{CD}=1$ and $W=\left\{CC,CD\right\}$ and $q_{DC}=q_{DD}=0$. In each of these two cases, the matrix $M^{V_{W}}$ itself is identically zero (where $M^{V_{W}}$ is the matrix such that $\ker M_{V_{W}}=V_{W}$), so clearly strategies of the form $\mathbf{q}=\left(1,1,q_{DC},q_{DD}\right)$ and $\mathbf{q}=\left(q_{CC},q_{CD},0,0\right)$ are stable points of gradient ascent. In each of the $22$ other cases, straightforward calculations show that the equations in $\mathbf{p}$ that need to be satisfied for $\dim V_{W}\left(\mathbf{p},\mathbf{q}\right) >2$ to hold are not all identically zero, and thus with probability one we must have $E\left(\mathbf{p}\right) =V_{W}\left(\mathbf{p},\mathbf{q}\right)$ based on the reasoning used in the proof of Theorem~\ref{thm:equalizer}.

Finally, we note that if $\left| W\right| =1$, then $\dim V_{W}\left(\mathbf{p},\mathbf{q}\right)\geqslant 3$, so this dimensionality argument does not work. What this means is that for any $\mathbf{p}$, one can find game parameters $\left(R,S,T,P\right)$ for which $\mathbf{p}$ is not an equalizer, yet $Y$ has a strategy $\mathbf{q}$ that is a stable point for gradient ascent with three components in $\left\{0,1\right\}$ and one in $\left(0,1\right)$. This does not mean that \emph{(b)} is false, however, because it is a statement about sampling $\mathbf{p}$ in a game with \emph{fixed} parameters, $\left(R,S,T,P\right)$. Due to the fact that we are now considering just one vanishing partial derivative ($\left| W\right| =1$), we can employ a slightly different line of reasoning. We illustrate the argument for $W=\left\{CC\right\}$, with the other three cases being similar. Since whether $\frac{\partial\pi_{Y}\left(\mathbf{p},\mathbf{q}\right)}{\partial q_{CC}}=0$ or not is independent of $q_{CC}$, and since $\left(q_{CD},q_{DC},q_{DD}\right)\in\left\{0,1\right\}^{3}$, it reduces to a (nontrivial) multilinear constraint on $\mathbf{p}$. The probability that this constraint is satisfied is zero under the sampling scheme for $\mathbf{p}$, which means that $\frac{\partial\pi_{Y}\left(\mathbf{p},\mathbf{q}\right)}{\partial q_{CC}}\neq 0$ almost everywhere. This completes the proof of part \emph{(b)} of the proposition.
\end{proof}

\begin{remark}
The proof of Proposition~\ref{prop:endpoints} is supplemented by the MATLAB file proposition1.m.
\end{remark}

\section*{Acknowledgments}
We thank Martin Nowak for helpful discussions and for suggesting the exploration of implementation errors. This work was supported by the Army Research Laboratory (grant W911NF-18-2-0265) and the Simons Foundation (Math+X Grant to the University of Pennsylvania). Part of this work was carried out while Arjun Mirani was supported by the Harvard College Research Program (HCRP) and the Harvard University Department of Physics.

\section*{Code availability}
Supporting code is available at https://github.com/alexmcavoy/selfish-robustness/.

\end{document}